\newif\ifdraft \drafttrue
\newif\iffull \fulltrue
\newcommand{\ignore}[1]{}
\newcommand{\ba}{\mathbf{a}}
\newcommand{\bh}{\mathbf{h}}
\renewcommand{\Re}{\mathbb{R}}
\title{\textsc{An Anti-Folk Theorem for Large Repeated Games with Imperfect Monitoring}\thanks{We would like to especially thank Yu Awaya and Vijay Krishna for several helpful comments regarding the manuscript.}}
\author{%
\textsc{Mallesh M. Pai}\thanks{Department of Economics, University of Pennsylvania. Email: \href{mailto:mallesh@econ.upenn.edu}{mallesh@econ.upenn.edu}.}
\and
\textsc{Aaron Roth}\thanks{Department of Computer and Information Science, University of Pennsylvania. Email: \href{mailto:aaroth@cis.upenn.edu}{aaroth@cis.upenn.edu} Supported in part by NSF Grant CCF-1101389, and NSF CAREER grant, and a Google Research Grant.}
\and
\textsc{Jonathan Ullman}\thanks{Center for Research on Computation and Society, Harvard University. Email: \href{mailto:jullman@seas.harvard.edu}{jullman@seas.harvard.edu}}
}
\date{\textsc \today}
\begin{document}
\maketitle

\begin{abstract}
We study infinitely repeated games in settings of imperfect monitoring. We first prove a family of theorems that show that when the signals observed by the players satisfy a condition known as $(\epsilon, \gamma)$-differential privacy, that the folk theorem has little bite: for values of $\epsilon$ and $\gamma$ sufficiently small, for a fixed discount factor, any equilibrium of the repeated game involve players playing approximate equilibria of the stage game in every period. Next, we argue that in large games ($n$ player games in which unilateral deviations by single players have only a small impact on the utility of other players), many monitoring settings naturally lead to signals that satisfy $(\epsilon,\gamma)$-differential privacy, for $\epsilon$ and $\gamma$ tending to zero as the number of players $n$ grows large. We conclude that in such settings, the set of equilibria of the repeated game collapse to the set of equilibria of the stage game.

Our results nest and generalize previous results of \citeasnoun{green1980noncooperative}, \citeasnoun{sabourian1990anonymous} and suggest that differential privacy is a natural measure of the ``largeness'' of a game. Further, techniques from the literature on differential privacy allow us to prove quantitative bounds, where the existing literature focuses on limiting results.
\end{abstract}

\thispagestyle{empty}
\newpage
\clearpage
\setcounter{page}{1}

\section{Introduction}
In a repeated game, agents interact with one another repeatedly, and can base their actions on their memory of past interactions.  One of the most robust features of models of infinitely repeated games is the multiplicity of equilibria. Repeatedly playing Nash equilibria of the underlying stage game is preserved as an equilibrium of the repeated game, but a much larger set is introduced as well. The so called ``folk theorems'' state, informally, that any individually rational payoffs for each player (i.e. payoffs that are above their minmax value in the game) can be achieved as an equilibrium of the repeated game (if players are patient enough). The folk theorems can be interpreted as either positive or negative statements.  From the perspective of an optimist (or a mechanism designer), they mean that high social-welfare behaviors (like cooperation in the Prisoner's Dilemma) can be supported as equilibria in repeated games, even though they cannot be supported as equilibria of the stage game. From the perspective of a pessimist (or a computer scientist), they mean that the social welfare of the \emph{worst} equilibrium can get worse (i.e. the price of anarchy increases). Regardless of the stance one takes, because of the severe multiplicity problem, Nash equilibria of repeated games lose much of their predictive power.

The simplest model of repeated games are repeated games with \emph{perfect monitoring}.  In these games, after each period, each agent observes the exact action played by each of his opponents.  In several settings of interest, this assumption is unrealistic.  A more natural model is repeated games with \emph{imperfect monitoring}, in which agents receive some kind of noisy signal about the play at the last period. This signal could be an estimate of the payoff of each of a player's actions, a noisy histogram of how many players played each action, or information that is encoded as a set of prices.  The game could be one of \emph{private monitoring}, in which each player receives his own signal privately, or \emph{public monitoring}, in which every player observes the same signal. Folk theorems are known to hold in almost all models of repeated games, even those with imperfect public or private monitoring.  See \citeasnoun{fudenberg1994folk} for a classic reference in the case of public monitoring, \citeasnoun{sugaya2011folk} for a recent result in the case of private monitoring, and \citeasnoun{mailath2006repeated} for a textbook treatment.

In this paper, we show that these theorems will often lack bite in ``large'' games of imperfect monitoring.  Particularly, we consider games where the observed signals satisfy the notion of ``differential privacy''~\cite{DMNS06}, which we show arises naturally in many games of imperfect monitoring with a large number of players.  We show that in games with differentially private signals, the set of equilibria of the repeated games must involve play of approximate equilibria of the stage game in every period---approximate Nash equilibrium in the case of public monitoring and approximate correlated equilibrium in the case of private monitoring.  Our results therefore suggests that in games with a large number of players, we should not expect to see ``folk theorem equilibria.'' This also adds to a long standing question in the literature on repeated games---for a given fixed $\delta$ what payoffs can be achieved by equilibria of the repeated game?%
\footnote{See e.g. \citeasnoun{mailath2002maximum} for a characterization in the case of the Prisoner's Dilemma with perfect monitoring.}
We implicitly provide an upper bound on this set---it is the set of payoffs that can be achieved by the play of approximate stage game equilibria in every period, with the degree of approximation being bounded by our theorems.

A different way to view our results arises from a seeming discontinuity between dynamic models with a continuum of players and those with a large but finite number of players.%
\footnote{We thank Nabil Al-Najjar and Preston McAfee for pointing out this connection to us.}
In a dynamic model, incentives are generated by agents punishing or rewarding others for past play. When only aggregate play is observed, then the play of one single agent does not affect the observed outcome in the continuum case, and hence individual deviations cannot be detected. With a finite number of players, a deviation may be small, but it is perfectly observable. Therefore deviations can be punished or rewarded. A small literature attempts to bridge this discontinuity by postulating a model of noisy observation. Our paper can be thought of as a ``quantitative'' version of their limit results for repeated games. In particular, the noisiness of our signals are parametrized by two numbers, $\epsilon$ and $\gamma$. We show that as  $\epsilon + \gamma$ vanishes, we approximate the unobservability of the continuum setting, while the perfect observability setting corresponds to $\epsilon + \gamma$ large.

\subsection{Related Work}
It has long been understood (hence the term ``folk theorem'') that the intertemporal incentives generated by repeated interaction can support more outcomes than just the equilibria of the one-off interaction.  The literature is too large to survey here, we refer the interested reader to the textbook by \citeasnoun{mailath2006repeated} for a comprehensive and up-to-date treatment. We review concepts that are standard in the literature as needed in Section \ref{sec:prelim}. Repeated games have also been studied in the computer science literature.
An older literature studies repeated games when agents have limited computational resources---see e.g. \citeasnoun{megiddo1986play}, \citeasnoun{papadimitriou1994complexity} or \citeasnoun{abreu1988structure}. There has also been some recent interest in repeated games in the computer science community. These consider games of perfect monitoring, and study the complexity of computing an equilibrium of the repeated game---see e.g. \citeasnoun{borgs2010myth} and \citeasnoun{halperntruth}.

The idea that noisy observation may limit the equilibria in repeated games has some precedent in the literature---notably the papers of  \citeasnoun{green1980noncooperative}, \citeasnoun{sabourian1990anonymous} and \citeasnoun{al2001large}.%
\footnote{\citeasnoun{levine1995agents} study similar questions in dynamic games.}
In relation to our setting, note that the first paper restricts attention to trigger strategies, the first two papers assume the stage game is anonymous, and all three consider public equilibria in a public monitoring setting. By contrast our results apply to general games that satisfy the differential privacy condition, and apply also to private monitoring settings. The results in those papers are limit theorems, while we prove quantitative bounds on the relationship between the ``amount'' of noise and the degree of deviation from stage-Nash equilibria that can be supported. Finally, while their conditions on the monitoring structure  
may be hard to verify, we are able we use techniques from the literature on differential privacy (see below) to give examples of ``natural'' settings satisfying our conditions.

``Differential Privacy'' is a condition that a family of probability distributions (which are parameterized by $n$ agent ``reports'') may satisfy, which provides a multiplicative bound on the influence that any single report can have on the resulting distribution. It was first defined by \citeasnoun{DMNS06} in the computer science literature, and has become a standard ``privacy'' solution concept. We do not use the condition here to deal with privacy concerns, but instead just as a useful Lipschitz condition that is commonly satisfied by many noisy families of signal distributions. The connection between differential privacy and game theory was first noted by \citeasnoun{MT07} who observed that when this condition is applied to an auction mechanism, it can be used to argue approximate truthfulness. They use it to derive prior-free near revenue-optimal mechanisms for digital goods auctions. Since then, differential privacy has been used as a tool in a number of game theoretic settings, including mechanism design without money \cite{NST12} and mechanism design in prior-free settings of incomplete information with mechanisms that have extremely weak powers of enforcement \cite{kearns2014mechanism,RR13}. A related line of work seeks to use traditional tools in mechanism design in settings in which agents have preferences for ``privacy'', which are quantified by measures related to differential privacy \cite{GR11,NOS12,Xiao13,CCKMV13,GL13,NVX14}. See \citeasnoun{PR13} for a survey of this area and its connections to mechanism design and game theory.

\section{Preliminaries}\label{sec:prelim}
We consider infinitely repeated games that are played over a series of periods. Time is discrete and is indexed by $t = 0,1,2, \ldots$.

\paragraph{Stage Game}
There are $n$  players who repeatedly play a stage game $G$. Each player $i$ has a finite set of actions $A_i$, and gets payoff $u_i: \prod_{j=1}^n A_j \rightarrow [0,1]$. We will denote an action profile (i.e. a vector of actions, one for each player) by $\ba \in \prod_{j=1}^n A_j.$ Mixed strategies are probability distributions over pure strategies, which we write as $\Delta A_i$ for player $i$. We write $\alpha_i \in \Delta A_i$ to denote a mixed strategy for player $i$, and $\mathbb{\alpha}$ to denote a vector of mixed strategies. Payoffs for mixed strategies are, as usual, the expected payoff when the profile of actions is drawn according to the distribution specified by the mixed strategies.

All players discount the future at a rate $\delta \in [0,1)$.  Therefore an annuity that pays $1$ in every period, forever, has a present discounted value of $\frac{1}{1-\delta}$. As is standard in the literature on repeated games, we normalize all payoffs by $(1-\delta)$---the value of the annuity of $1$ forever is therefore normalized to $1$, while e.g. a payoff of $1$ today and nothing thereafter is normalized to $(1-\delta)$ (because the agent would be indifferent between receiving $1$ this period and nothing thereafter and an annuity of $(1-\delta)$).

Before we describe repeated games, we begin by reviewing two approximate equilibrium concepts for the stage game.

\begin{definition} \label{def:ne}
An $\eta$-approximate Nash equilibrium of the stage game $G$ is a vector $\alpha \in \prod_{j=1}^n \Delta A_i$ of mixed strategies, such that for all $i$ and all $a_i \in A_i$:
$$u_i(\alpha) \geq u_i(a_i, \alpha_{-i}) - \eta$$
\end{definition}

\begin{definition} \label{def:ce}
An $\eta$-approximate correlated equilibrium of the stage game $G$ is defined by a set of signals $S$, a distribution $\mathcal{D} \in \Delta S^n$ over $n$-tuples of signals, and a function $\sigma_i:S\rightarrow \Delta A_i$ mapping signals to mixed strategies for each player $i$. $(\mathcal{D}, \sigma)$ is an $\eta$-approximate correlated equilibrium if for every signal $s \in S$ and every action $a_i \in A_i$:
$$\sum_{s_{-i}} \left[u_i(a_i, \sigma_{-i}(s_{-i}))- u_i(\sigma_i(s), \sigma_{-i}(s_{-i}))\right] \mathcal{D}(s_{-i}|s_i = s) \leq \eta$$
\end{definition}


\paragraph{Monitoring}
We consider models of both public and private monitoring. In a game of \emph{public monitoring}, at the end of every period, after players have played their actions for that period, they receive feedback about the actions taken by others via a \emph{public signal}. Formally, at the end of the period, when the profile of actions taken is $\ba$, all players commonly learn a signal $s \in S$, where $s$ is drawn independently by a distribution $P_{\ba} \in \Delta S.$%
\footnote{Note that this model subsumes the standard perfect monitoring model---if $S \equiv \prod A_i$ and $P_{\ba}(s) = 1 \text{ iff } s = \ba$.}

In a game of \emph{private} monitoring, at the end of every period, after players have played their actions for that period, each player $i$ receives a private signal $s_i \in S$. Player $i$ does not see the signals $s_j$ for $j \neq i$. Formally, at the end of a period in which the profile of actions taken is $\ba$, a vector of signals $s \in S^n$ is drawn from a distribution $P_{\ba} \in \Delta S^n$,
and each player $i$ observes component $s_i$ of $s$. Note that the signals that different players receive can be correlated. Further, note that this generalizes the public monitoring setting---public monitoring is simply the case where every player's signal is identical.

The ex-post payoff of each player $i$ is given by a function $U_i: A_i \times S \rightarrow \Re$ so that the signal a player receives is a sufficient statistic for the payoff he receives. We assume that
\begin{align}\label{eqn:noisy_payoffs}
u_i(a_i, a_{-i}) = \sum_{s \in S} U_i(a_i, s) P_{\ba}(s),
\end{align}
so that the noisy payoff does not affect the incentive structure of the game.

We consider a commonly satisfied condition on the distribution of signals that players observe---informally, that they are \emph{noisy} in a way that does not reveal ``much'' about the action of any single player.

\begin{definition}
A public monitoring signal structure is said to satisfy $(\epsilon,\gamma)$-differential privacy for some $\epsilon, \gamma >0$ if for every player $i$ and for every $\ba =  (a_1,a_2, \ldots, a_n)$, $\ba' = (a_i', a_{-i})$, and any event $E \subseteq S,$
\begin{align*}
\exp(-\epsilon) P_{\ba'}(E)-\gamma\leq P_{\ba}(E) \leq \exp(\epsilon) P_{\ba'}(E)+\gamma.
\end{align*}
A private monitoring signal structure satisfies $(\epsilon,\gamma)$-differential privacy if for every player $i$ and for every $\ba =  (a_1,a_2, \ldots, a_n)$, $\ba' = (a_i', a_{-i})$, and for every event $E \subseteq S^{n}$,
\begin{align*}
\exp(-\epsilon) P_{\ba'}(E)-\gamma\leq P_{\ba}(E) \leq \exp(\epsilon) P_{\ba'}(E)+\gamma.
\end{align*}
\end{definition}
\begin{remark}
Differential privacy is a condition applied to study the privacy of randomized algorithms on a ``database''(i.e. not generally applied to the signal structure of a repeated games), and was first defined by Dwork, McSherry, Nissim, and Smith \cite{DMNS06}.
\end{remark}
\begin{remark}
Due to (\ref{eqn:noisy_payoffs}), the assumption that a signalling structure satisfies $(\epsilon,\gamma)$-differential privacy is nontrivial. Fixing values of $\epsilon, \gamma$, there exist stage games for  which no such $(\epsilon, \gamma)$-differentially private signal structure can satisfy (\ref{eqn:noisy_payoffs})). That said, several stage games of interest can be paired with differentially private signal structures. Broadly, this includes any \emph{large} game, i.e. one in which any player, by unilaterally changing the action he plays, has a small impact on the stage-game utility of any other player. We present several examples in Section \ref{sec:examples}.
\end{remark}

\begin{remark}
The papers of \citeasnoun{green1980noncooperative} and \citeasnoun{sabourian1990anonymous} use the total variation norm, which corresponds to $(0,\gamma)$-differential privacy.
\end{remark}

For technical reasons, the following definition is often assumed (see e.g. \citeasnoun{mailath2006repeated}). We say the  that the monitoring structure is one of \emph{no observable deviations}, i.e. that the marginal distribution of private signals have full support regardless of the action profile played. Formally:
\begin{definition} \label{def:nod}
A repeated game of public monitoring is said to have \emph{no observable deviations} if $$\forall\,\ba \in \prod A_i, s \in S: \;\; P_{\ba}(s) >0.$$
A repeated game of private monitoring is said to have no observable deviations if $$\forall\, \ba \in \prod A_i, s_i \in S:\;\; P_{\ba}(s_i) >0.$$
\end{definition}

\bigskip

Intuitively, in the repeated game agents will be trying to infer what other players played from this signal---this assumption ensures that Bayesian updating is well defined after any observed signal.

\paragraph{Histories and Strategies}
In the public monitoring setting the public information in period $t$ is the history of public signals, $h^t = (s^1,s^2, \ldots, s^{t-1})$. The set of public histories is
\begin{align*}
\mathscr{H} \equiv \cup_{t=2}^\infty S^{t-1}.
\end{align*}
A history for a player includes both the public history and the history of actions he has taken, $h^t_i = (s^1, a^1_i, s^2, a^2_i, \ldots s^{t-1}, a^{t-1}_i).$ Given a a private history $h^t_i= (s^1, a^1_i, s^2, a^2_i, \ldots s^{t-1}, a^{t-1}_i)$, one can define the public history $h^t = (s^1,s^2, \ldots s^{t-1})$ as just the vector of public signals observed. The set of private histories for player $i$ is
\begin{align*}
\mathscr{H}_i \equiv \cup_{t=2}^{\infty} (A_i \times S)^{t-1}.
\end{align*}
We refer to the vector of private histories (one for each player), at period $t$ by $\bh^t \equiv (h_1^t, h_2^t, \ldots, h_n^t).$
In the private monitoring setting, the history for a player $i$ is the history of the actions he has taken, together with his sequence of private signals: $h^t_i = (s^1_i, a^1_i, s^2_i, a^2_i, \ldots s^{t-1}_i, a^{t-1}_i).$

A pure strategy for player $i$ is a mapping from the set of histories of player $i$ to the set of actions of player $i$:
\begin{align*}
\sigma_i : \mathscr{H}_i \to A_i,
\end{align*}
and a mixed strategy is a mapping from histories to distributions over actions. We denote $\sigma_i(h_i^t)(a_i)$ as the probability strategy  $\sigma_i$ puts on action $a_i$ after private hisotry $h_i^t$.

Strategies in a repeated game can in general depend on a player's entire history, but in a game of public monitoring, we can also consider \emph{public} strategies which depend only on the public signals observed so far (and not on the unobserved actions played by the player).
\begin{definition}\label{def:public_strat}
A strategy $\sigma_i$ is public if it only depends on the public history $\mathscr{H}$, i.e. for any period $t$  and two private histories $h^t_i, h^{\prime t}_i$ with the same public history $h^t$,
\begin{align*}
\sigma_i(h^t_i) = \sigma_i(h^{\prime t}_i).
\end{align*}
\end{definition}
%

\paragraph{Solution Concepts}
Equilibria are defined with respect to the cumulative discounted payoff that players obtain when the play according to a profile of strategies $\sigma$. Formally, we need to recursively define the expected payoff of strategy profile $\sigma$ to player $i$ after private history $h_i^t$. Let us begin with the case of public monitoring.

Fix a strategy profile $\sigma$. Then, for every player $i$ and every history $h_i^t$, we can define a posterior distribution $\mathcal{D}(i, h_i^t, \sigma_{-i})$ on histories $h_{-i}^t$ observed by the other players. In other words the probability that the other players observed private histories $h_{-i}^t$ given player $i$'s private history is $h_i^t$ and they play strategies $\sigma_{-i}$ is given as $\mathcal{D}(i, h_i^t, \sigma_{-i})(h_{-i}^t)$.

Further, given strategies $\sigma$, define $W_{i,\sigma}(\bh^t)$ as:
\begin{align*}
W_{i,\sigma}(\bh^t) =& (1-\delta) u_i(\sigma_i(h_i^t), \sigma_{-i}(h_{-i}^t)) \\
&+ \delta \sum_{s,a_i, a_{-i}} W_{i,\sigma}((\bh^t, s, (a_i,a_{-i}))) \sigma_i(h_i^t)(a_i) \sigma_{-i}(h_{-i}^t)(a_{-i}) P_{(a_i,a_{-i})}(s)
\end{align*}
Note that the argument to $W_i$ is the vector of private histories of all agents $\bh^t$. It represents the expected discounted value to agent $i$ continuing from a point where his private history is $h_i^t$ \emph{and} the private histories of other players are $h_{-i}^t$. Of course player $i$ only knows his own private history, therefore player $i$'s continuation value after a private history $h_i^t$ is
\begin{align*}
V_{i,\sigma}(h_i^t) =& \sum_{h_{-i}^t} W_{i,\sigma}((h_i^t,h_{-i}^t))  \mathcal{D}(i, h_i^t, \sigma_{-i})(h_{-i}^t)\\
=& (1- \delta) \sum_{h_{-i}^t}u_i(\sigma_i(h_i^t), \sigma_{-i}(h_{-i}^t)) \mathcal{D}(i, h_i^t, \sigma_{-i})(h_{-i}^t)\\
&+ \delta \sum_{h_{-i}^t,s,a_i, a_{-i}} \bigg( W_{i,\sigma}(((h_i^t, h_{-i}^t, s, (a_i,a_{-i})))\mathcal{D}(i,h_i^t,\sigma_{-i})(h_{-i}^t) \\
& \hphantom{+ \delta \sum_{h_{-i}^t,s_i, a_i, s_{-i}, a_{-i}} \bigg(\,\, }\;\; \sigma_i(h_i^t)(a_i) \sigma_{-i}(h_{-i}^t)(a_{-i}) P_{(a_i,a_{-i})}(s) \bigg)
\end{align*}

For the case that $\sigma$ is in public strategies, for any public history $h^t$, we can simplify the continuation value as:
\begin{align*}
V_{i,\sigma}(h^t) = (1-\delta) u_i(\sigma(h^t)) + \delta \sum_{s} V_i(\sigma(h^t,s)) P_{\sigma(h^t)}(s).
\end{align*}

%

For the case of private monitoring, the definition is slightly different, since every player sees a separate private signal. Define $W_{i,\sigma}(\bh^t)$ as:
\begin{align*}
W_{i,\sigma}(\bh^t) =& (1-\delta) u_i(\sigma_i(h_i^t), \sigma_{-i}(h_{-i}^t)) \\
&+ \delta \sum_{s_i,s_{-i},a_i, a_{-i}} W_{i,\sigma}((\bh^t, (s_i,s_{-i}), (a_i,a_{-i}))) \sigma_i(h_i^t)(a_i) \sigma_{-i}(h_{-i}^t)(a_{-i}) P_{(a_i,a_{-i})}[s_i,s_{-i}]
\end{align*}
Player $i$'s continuation value after a private history $h_i^t$ is
\begin{align*}
V_{i,\sigma}(h_i^t) =& \sum_{h_{-i}^t} W_{i,\sigma}((h_i^t,h_{-i}^t))  \mathcal{D}(i, h_i^t, \sigma_{-i})(h_{-i}^t)\\
=& (1- \delta) \sum_{h_{-i}^t}u_i(\sigma_i(h_i^t), \sigma_{-i}(h_{-i}^t)) \mathcal{D}(i, h_i^t, \sigma_{-i})(h_{-i}^t)\\
&+ \delta \sum_{h_{-i}^t,s_i,s_{-i},a_i, a_{-i}} \bigg( W_{i,\sigma}(((h_i^t, h_{-i}^t, (s_i,s_{-i}), (a_i,a_{-i})))\mathcal{D}(i,h_i^t,\sigma_{-i})(h_{-i}^t) \\
& \hphantom{+ \delta \sum_{h_{-i}^t,s_i, a_i, s_{-i}, a_{-i}} \bigg(\,\, }\;\; \sigma_i(h_i^t)(a_i) \sigma_{-i}(h_{-i}^t)(a_{-i}) P_{(a_i,a_{-i})}[s_i,s_{-i}] \bigg)
\end{align*}

\begin{definition}[Nash Equilibrium]
A set of strategies $\sigma = (\sigma_1,\ldots,\sigma_n)$ is a Nash equilibrium of a repeated game if for every player $i$ and every strategy $\sigma_i'$:
$$V_{i, \sigma}(\emptyset) \geq V_{i, (\sigma_{-i}, \sigma_i')}(\emptyset)$$
\end{definition}

A subgame perfect Nash equilibrium requires that from any history, the continuation play forms a Nash equilibrium.
\begin{definition}[Subgame perfect Nash equilibrium]\label{def:spne}
In a game of perfect monitoring, a set of strategies $\sigma = (\sigma_1,\ldots,\sigma_n)$ is a subgame-perfect Nash equilibrium of a repeated game if for every player $i$, every finite history $h^t$ and every strategy $\sigma_i'$:
$$V_{i, \sigma}(h^t) \geq V_{i, (\sigma_{-i}, \sigma_i')}(h^t)$$
\end{definition}

In the case of imperfect monitoring, histories are not common knowledge (agents privately observe the actions they take), there is no well-defined ``subgame.'' There is, however, one in the case where all players use public strategies.
%
%
Formally, in games of public monitoring, we can define a ``perfect public equilibrium.'' This is a particularly structurally simple equilibrium concept in repeated games, but one in which a folk theorem is known to hold. Although our results hold more generally, we will first prove our ``anti-folk-theorem'' for perfect public equilibrium.

\begin{definition} \label{def:ppe}
A \emph{perfect public equilibrium} (PPE) is a profile of public strategies for each player (Definition \ref{def:public_strat}) such that the continuation play from every public history constitutes a Nash Equilibrium.
\end{definition}
\citeasnoun{fudenberg1994folk} show that as long as the signalling structure is rich enough (satisfying a full rank condition), a Folk theorem holds in perfect public equilibria.%

Finally, to consider general (not necessarily public) strategies in games of public monitoring, and games of private monitoring, there are no counterparts to subgame perfect/ perfect public equilibria. Consequently, the refinement that is normally considered is that of sequential equilibrium, which requires that for any player $i$ and  any private history $h_i^t$, the continuation play is a best reply to the play of others given the agent's beliefs about the private histories they observe. Formally:
\begin{definition} \label{def:sequential}
Suppose game is one of no observable deviations (Definition \ref{def:nod}). A strategy profile $\sigma$ is a \emph{sequential equilibrium} if for all private histories $h_i^t$, $\sigma_i(h_i^t) $ is a best reply to $\sigma_{-i}(h^t_{-i})|h_i^t. $
\end{definition}
As we alluded to earlier, the assumption of no observable deviations is made to ensure that the conditional distribution $\sigma_{-i}(h^t_{-i})|h_i^t $ is well defined at every private history.

\section{Signal Privacy Yields Anti-Folk Theorems}

\subsection{Public Monitoring}

\subsubsection{Perfect Public Equilibria}
As we said earlier, the main ideas in our construction are easiest to explain for the case of public perfect equilibria (Definition \ref{def:ppe}). The intuition is essentially encapsulated as follows: in general, the only thing that prevents an agent at some stage $t$ from deviating from his equilibrium strategy, and playing a (stage-game) best response to the distribution over his opponent's actions is fear of punishment: if his opponents can detect this deviation, then they can change their behavior to lower his expected \emph{future} payoff. Differential privacy provides us a simple, worst-case way of quantifying the decrease in expected future payoff that can result from a single player's one-stage unilateral deviation. If this decrease can be made small enough, then it cannot serve as an incentive to prevent any player from playing anything other than an (approximate) stage-game best response to his opponents. Thus, every day, all players must be playing an approximate equilibrium of the stage game. We formalize this intuition in the next theorem:

\begin{theorem} \label{thm:ppe}
Fix any repeated game with discount factor $\delta$, with public signals that satisfy $(\epsilon, \gamma)$-signal privacy. Let $\sigma = (\sigma_1,\ldots,\sigma_n)$ be a perfect public equilibrium (Definition \ref{def:ppe}). Then for every history $h^t$, the distribution on actions that result, $(\sigma_1(h^t),\ldots,\sigma_n(h^t))$, forms an $\eta$-approximate Nash equilibrium of the stage game, for
$$\eta = \frac{\delta}{1-\delta}(\epsilon + \gamma).$$
\end{theorem}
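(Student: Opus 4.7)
The plan is to invoke the one-shot deviation principle. Fix a PPE $\sigma$, a public history $h^t$, a player $i$, and an alternative pure action $a_i' \in A_i$; let $\tilde{\sigma}_i$ denote the strategy that agrees with $\sigma_i$ everywhere except at $h^t$, where it plays $a_i'$. Because $\sigma$ is a PPE, continuation play from $h^t$ is a Nash equilibrium (Definition~\ref{def:ppe}), so this one-shot deviation cannot be strictly profitable: $V_{i,\sigma}(h^t) \geq V_{i,(\tilde{\sigma}_i,\sigma_{-i})}(h^t)$. The key structural feature is that $\sigma$ is a public strategy, so the continuation value $V_{i,\sigma}(h^t,s)$ at any successor history is a function of the public signal $s$ alone---not of the action player $i$ actually took at stage $t$. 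Hence the continuation value after a unilateral, one-stage deviation is the \emph{same} function of $s$ as on the equilibrium path.

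Expanding both sides via the recursive formula for $V$ on public strategies stated in the preliminaries, and rearranging, the ``no one-shot profitable deviation'' inequality becomes
\[
u_i(\sigma(h^t)) - u_i(a_i',\sigma_{-i}(h^t)) \;\geq\; -\frac{\delta}{1-\delta}\Bigl(\mathbb{E}_{s\sim P_{\sigma(h^t)}}[V_{i,\sigma}(h^t,s)] - \mathbb{E}_{s\sim P_{(a_i',\sigma_{-i}(h^t))}}[V_{i,\sigma}(h^t,s)]\Bigr).
\]
Thus bounding the stage-game regret of $i$ against the deviation $a_i'$ reduces to bounding the difference between two expectations of the continuation value under the signal distributions generated by two mixed action profiles that differ in only player $i$'s component.

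The next step invokes differential privacy. Since per-period payoffs lie in $[0,1]$ and are normalized by $(1-\delta)$, the continuation value $V_{i,\sigma}(h^t,s) \in [0,1]$. Differential privacy, as defined for pure profiles, extends automatically by linearity to mixed profiles that differ in a single coordinate (because $P_{\ba}$ is multilinear in the coordinates of the mixed profile). Applying the lower-side DP inequality $P_{(a_i',\sigma_{-i}(h^t))}(E) \geq e^{-\epsilon} P_{\sigma(h^t)}(E) - \gamma$ and integrating over the superlevel sets of $V_{i,\sigma}(h^t,\cdot)$ yields $\mathbb{E}_{P_{\sigma(h^t)}}[V] - \mathbb{E}_{P_{(a_i',\sigma_{-i}(h^t))}}[V] \leq (1-e^{-\epsilon})\,\mathbb{E}_{P_{\sigma(h^t)}}[V] + \gamma \leq \epsilon + \gamma$, using $1-e^{-\epsilon}\leq\epsilon$ and $V\leq 1$. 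Substituting back gives $u_i(\sigma(h^t)) \geq u_i(a_i',\sigma_{-i}(h^t)) - \tfrac{\delta}{1-\delta}(\epsilon+\gamma) = u_i(a_i',\sigma_{-i}(h^t)) - \eta$; since $i$ and $a_i'$ were arbitrary, $\sigma(h^t)$ is an $\eta$-approximate Nash equilibrium of the stage game.

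The only place where anything subtle happens is the lift of the differential-privacy inequality from events to expectations of a $[0,1]$-valued random variable---ultimately a one-line ``layer cake'' computation using the bound $1-e^{-\epsilon}\leq\epsilon$, and the observation that it is the \emph{lower} side of the DP inequality (not the upper side, which would only yield $e^\epsilon-1$) that produces the advertised constant. Everything else is routine: the PPE assumption localizes the analysis to a single stage via the one-shot deviation principle, and publicness of $\sigma$ guarantees that the continuation value depends only on the public signal, so that $(\epsilon,\gamma)$-privacy of signals translates directly into a bound on continuation-value differences.
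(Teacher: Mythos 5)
Your proof is correct and follows essentially the same route as the paper's: a one-shot deviation at the public history, publicness of $\sigma$ ensuring the continuation value depends only on the realized signal, and the lower-side differential-privacy inequality turning the difference of expected continuation values into the bound $\epsilon+\gamma$, whence $\eta=\tfrac{\delta}{1-\delta}(\epsilon+\gamma)$. Your layer-cake lift of the DP inequality from events to expectations of a $[0,1]$-valued function is a cleaner packaging of the paper's pointwise $\Delta_s$ bookkeeping, and your explicit remark that DP extends by linearity to mixed profiles differing in one coordinate makes precise a step the paper uses implicitly.
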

\begin{proof}

Since $\sigma$ forms a perfect public equilibrium of the repeated game, for every public history $h^t$ we have:
$$V_{i, \sigma}(h^t) \geq V_{i, (\sigma_{-i}, \sigma_i')}(h^t)$$
Expanding this definition, and noting that in particular single stage deviations cannot be profitable, we know for every single-stage deviation $a_i'$:
\begin{align*}
&(1-\delta) u_i(\sigma(h^t)) + \delta  \sum_{s \in S} V_{i,\sigma}((h^{t},s)) P_{\sigma(h^t)}(s) \\
\geq& (1-\delta) u_i(a'_i, \sigma_{-i}(h^t)) + \delta  \sum_{s \in S} V_{i,\sigma}((h^{t},s)) P_{(a'_i, \sigma_{-i}(h^t))}(s).
\end{align*}
By the definition of $(\epsilon,\gamma)$-signal privacy, we also know that for all $s$,
\begin{align*}
P_{\sigma(h^t)}(s) \geq \exp(-\epsilon)P_{(a'_i, \sigma_{-i}(h^t))}(s) - \gamma
\end{align*}
Since $V_{i,\cdot}$ is the normalized infinite discounted sum of numbers between $0$ and $1$, $V_{i,\cdot} (\cdot) \in [0,1]$. We can therefore derive:
\begin{align*}
&(1-\delta) u_i(\sigma(h^t))\\
\geq& (1-\delta) u_i(a'_i, \sigma_{-i}(h^t)) + \delta\sum_{s \in S} V_{i,\sigma}((h^{t},s))\left(P_{(a'_i, \sigma_{-i}(h^t))}(s) - P_{\sigma(h^t)}(s)\right)  \\
\geq& (1-\delta) u_i(a'_i, \sigma_{-i}(h^t)) + \delta\sum_{s \in S} V_{i,\sigma}((h^{t},s))\left(\exp(-\epsilon)P_{(a'_i, \sigma_{-i}(h^t))}(s) - \Delta_{s} - P_{\sigma(h^t)}(s)\right)
\end{align*}
where $\Delta_s$ is the smallest non-negative value that satisfies the inequality
\begin{align*}
P_{\sigma(h^t)}(s) \geq \exp(-\epsilon)P_{(a'_i, \sigma_{-i}(h^t))}(s) - \Delta_s.
\end{align*}
Note that for signals $s$ such that $P_{\sigma(h^t)}(s) \leq \exp(-\epsilon) P_{(a'_i, \sigma_{-i}(h^t))}(s)$, $\Delta_s = 0$. Let $S_+ = \{s \in S : \Delta_s > 0$\}.  We continue, noting that $\exp(-\epsilon) > 1-\epsilon$ for $\epsilon < 1$:
\begin{align*}
&(1-\delta) u_i(\sigma(h^t)) \\
\geq& (1-\delta) u_i(a'_i, \sigma_{-i}(h^t)) + \delta\sum_{s \in S} V_{i,\sigma}((h^{t},s))\left(- \epsilon  P_{\sigma(h^t)}(s) - \Delta_s\right) \\
\geq& (1-\delta) u_i(a'_i, \sigma_{-i}(h^t)) - \delta \left(\epsilon + \sum_{s \in S_+} V_{i, \sigma}((h^{t},s))\Delta_s \right) \\
\geq& (1-\delta) u_i(a'_i, \sigma_{-i}(h^t)) - \delta(\epsilon +\gamma).
\end{align*}
The last inequality follows because $\sum_{s \in S_+} \Delta_s$ is the smallest value $\Delta$ that satisfies the inequality $P_{\sigma_i(h^t)}(S_+)  \geq \exp(-\epsilon)P_{(a'_i, \sigma_{-i}(h^t))}(S_+) - \Delta$, but by our guarantee of $(\epsilon, \gamma)$ signal privacy, we know that we must have $\Delta \leq \gamma$, and so $\sum_{s \in S_+} \Delta_s \leq \gamma$.
Dividing both sides by $1-\delta$ we get:
$$u_i(\sigma(h^t)) \geq u_i(a'_i, \sigma_{-i}(h^t)) - \frac{\delta}{1-\delta}(\epsilon+\gamma)$$

In other words, at any history $h^t$, the prescribed strategy profile $\sigma(h^t)$ must form a $\frac{\delta}{1-\delta} (\epsilon+\gamma)$-approximate Nash equilibrium of the stage game.
\end{proof}

With a bit more care, we can prove a related theorem that extends to the case of non-public strategies. The proof is similar, but with two caveats:

First, we must now consider a deviation of player $i$, $\sigma_i'$ that not only makes a single stage game deviation to $a_i'$ at some day $t$, but then continues to play using an artificial history that for every day $t' > t$, records that on day $t$, player $i$ played an action drawn from $\sigma_i(h_i^t)$ instead of having played $a_i'$.

Second, the strategies that form the stage-game equilibrium are not exactly $\sigma_i(h^t_i)$, for with these strategies, each player might be best-responding not to the strategies of his opponent, but to the \emph{expected} strategies of his opponents (defined by their unknown histories $h^t_j$, which take into account not just the publicly observed signals, but also the privately known action history), conditioned on the public part of the history.

Instead, the strategies that form the stage-game equilibrium are the randomized strategies $\sigma_i(h^t_i | s)$, where the randomness is taken over the realization of the players (privately known) action history, conditioned on the (publicly observed) signal history. We begin with some notation: Given a set of $t-1$ public signals $s^t = (s_1,\ldots,s_{t-1})$ and a vector of actions for player $i$, $\ba_i^t = (a_i^1,\ldots,a_i^{t-1})$, write $h_i^t(s^t, \ba_i^t) = (a_i^1, s_1, \ldots, a_i^{t-1}, s_{t-1})$ to denote the history that combines them. Given a strategy $\sigma_i$ for player $i$ mapping histories to actions, together with a vector of public signals $s^t$ write: $\hat{\sigma}_{i | s^t}$ to denote the probability distribution that plays each action $a_i$ with probability equal to:
$$\Pr[a_i \sim \hat{\sigma}_{i | s^t}]= \sum_{\ba_i^t \in A_i^{t-1}} \Pr[\ba_i^t | s^t] \cdot \sigma_i(h_i^t(s^t, \ba_i^t))$$
In other words, $\hat{\sigma}_{i | s^t}$ represents the distribution defined over actions played by player $i$, $\sigma_i(h_i^t)$ when all that is known is the public history, and the randomness is both over the choice of actions $\ba_i^t$ that define $h_i^t$, as well as the randomness in the mixed strategies $\sigma_i(h_i^t)$.

We can now state our theorem:
\begin{theorem} \label{eqn:seq_public}
Fix any repeated game with discount factor $\delta$, with public signals that satisfy $(\epsilon, \gamma)$-signal privacy and no observable deviation (Definition \ref{def:nod}). Let $\sigma = (\sigma_1,\ldots,\sigma_n)$ be a sequential equilibrium of the repeated game (Definition \ref{def:sequential}). Then for every public history of signals $s^t$, the distribution on actions played on day $t$, $(\hat{\sigma}_{1 | s^t},\ldots,\hat{\sigma}_{n | s^t})$, forms an $\eta$-approximate Nash equilibrium of the stage game, for
$$\eta = \frac{\delta}{1-\delta}(\epsilon + \gamma).$$
\end{theorem}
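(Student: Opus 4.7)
The plan is to reduce to the argument of Theorem~\ref{thm:ppe} by a carefully engineered one-shot deviation at day~$t$ that uses a ``phantom'' continuation. Fix a public signal history $s^t$, a player $i$, and a pure alternative $a_i' \in A_i$, and for each private action history $\ba_i^t$ consistent with $s^t$ define $\sigma_i'$ to play $a_i'$ at the private history $h_i^t(s^t,\ba_i^t)$, but then to draw an ``imagined'' day-$t$ action $\tilde a_i \sim \sigma_i(h_i^t)$ and from day $t+1$ onward to play $\sigma_i$ on the pretend history that records $\tilde a_i$ rather than $a_i'$. Because the opponents' strategies $\sigma_{-i}$ depend only on their own private histories, the joint distribution of every future quantity relevant to player $i$'s payoff is identical to that under honest play \emph{except} that the day-$t$ public signal $s_t$ is drawn from $P_{(a_i',\sigma_{-i}(h_{-i}^t))}$ rather than from $P_{(\sigma_i(h_i^t),\sigma_{-i}(h_{-i}^t))}$. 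This is what isolates the DP bound to a single period, exactly as in Theorem~\ref{thm:ppe}.

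Next I would apply sequential equilibrium at every $h_i^t$ consistent with $s^t$, writing $V_{i,\sigma}(h_i^t) \geq V_{i,(\sigma_{-i},\sigma_i')}(h_i^t)$ and expanding both sides via the recursion for $W_{i,\sigma}$. After the phantom-history cancellation, the difference $W_{i,\sigma}(\bh^t) - W_{i,(\sigma_{-i},\sigma_i')}(\bh^t)$ reduces to the stage-utility gap $(1-\delta)[u_i(\sigma_i(h_i^t),\sigma_{-i}(h_{-i}^t)) - u_i(a_i',\sigma_{-i}(h_{-i}^t))]$ plus a $\delta$-weighted term of the form $\sum_s g(s)[P_{(a_i,a_{-i})}(s) - P_{(a_i',a_{-i})}(s)]$, where $g$ is a convex combination of continuation $W$-values and therefore takes values in $[0,1]$. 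The $(\epsilon,\gamma)$-DP computation of Theorem~\ref{thm:ppe} (splitting $S$ into the set where the multiplicative bound holds and its complement, using $e^{-\epsilon} \geq 1-\epsilon$) then bounds this term by $\delta(\epsilon+\gamma)$ in absolute value.

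I would then average the resulting inequality over $h_{-i}^t$ against the belief $\mathcal{D}(i,h_i^t,\sigma_{-i})$, yielding
\[
\mathbb{E}_{h_{-i}^t|h_i^t}[u_i(\sigma_i(h_i^t),\sigma_{-i}(h_{-i}^t))] \;\geq\; \mathbb{E}_{h_{-i}^t|h_i^t}[u_i(a_i',\sigma_{-i}(h_{-i}^t))] - \frac{\delta}{1-\delta}(\epsilon+\gamma),
\]
and finally average over $h_i^t$ conditional on $s^t$. The last step rests on the observation that, conditional on the public signal history $s^t$, the private action histories $(h_1^t,\ldots,h_n^t)$ are independent, since each period's action randomization is drawn independently across players. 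Hence the joint expectation factorizes, and $\mathbb{E}_{\bh^t|s^t}[u_i(\sigma_i(h_i^t),\sigma_{-i}(h_{-i}^t))]$ equals $u_i(\hat\sigma_{i|s^t},\hat\sigma_{-i|s^t})$ by the definition of $\hat\sigma_{\cdot|s^t}$, and similarly on the deviation side since the deviation $a_i'$ does not depend on $h_{-i}^t$.

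The main obstacle is getting the phantom-history bookkeeping right: one must verify that under $\sigma_i'$ the joint distribution of (pretend $h_i^{t+1}$, actual $h_{-i}^{t+1}$) given $\bh^t$ coincides with its honest counterpart except for the substitution of $P_{(a_i',\sigma_{-i}(h_{-i}^t))}$ for $P_{(\sigma_i(h_i^t),\sigma_{-i}(h_{-i}^t))}$ in the day-$t$ signal distribution, so that the continuation $W$'s cancel cleanly and only a single copy of the $\epsilon+\gamma$ error appears. The no-observable-deviations hypothesis (Definition~\ref{def:nod}) is what licenses invoking the sequential equilibrium inequality at every $h_i^t$ that arises along the way, since the relevant conditionals $\sigma_{-i}(h_{-i}^t)\mid h_i^t$ are everywhere well-defined.
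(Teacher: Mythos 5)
Your proposal follows essentially the same route as the paper's own proof in the appendix: the ``phantom history'' one-shot deviation that continues as if an action drawn from $\sigma_i(h_i^t)$ had been played, the isolation of the $(\epsilon,\gamma)$ bound to the single day-$t$ signal distribution, and the observation that private action histories are conditionally independent given the public signal history so that averaging over $h_i^t \mid s^t$ and $h_{-i}^t \mid s^t$ produces exactly the product profile $(\hat{\sigma}_{1|s^t},\ldots,\hat{\sigma}_{n|s^t})$. If anything, you are slightly more explicit than the paper about the final averaging step that converts $\sigma_i(h_i^t)$ into $\hat{\sigma}_{i|s^t}$, but the decomposition and the key cancellation are identical.
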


The proof is deferred to the appendix.

\subsubsection{Non-subgame perfect equilibria}

We can prove a similar theorem even for non-subgame perfect equilibrium -- except that now our claim only applies to histories that occur with nonzero probability when players play according to the specified equilibrium.
\begin{theorem}
Fix any repeated game with discount factor $\delta$, with public signals that satisfy $(\epsilon, \gamma)$-differential privacy. Let $\sigma = (\sigma_1,\ldots,\sigma_n)$ denote a public equilibrium (not necessarily subgame perfect). Then for every history $h^t$ that occurs with positive probability when players play according to $\sigma$, the distribution on actions at stage $t$, $(\sigma_1(h^t),\ldots,\sigma_n(h^t))$ forms an $\eta$-approximate Nash equilibrium of the stage game, for
$$\eta = \frac{\delta}{1-\delta}(\epsilon + \gamma)$$
\end{theorem}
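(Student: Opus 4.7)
The plan is to reduce the statement to the argument already in the proof of Theorem \ref{thm:ppe}. Since we now only have a Nash equilibrium---not a subgame-perfect one---we cannot invoke a one-shot deviation inequality at every public history for free; the positive-probability hypothesis is precisely what is needed to recover it from the global Nash condition.

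Fix a public history $h^t$ reached with probability $p > 0$ under $\sigma$, a player $i$, and a candidate one-shot deviation action $a_i' \in A_i$. Construct the global deviation strategy $\sigma_i'$ that agrees with $\sigma_i$ at every public history except $h^t$, and at $h^t$ plays $a_i'$ deterministically. Because $\sigma$ and $(\sigma_{-i},\sigma_i')$ agree as functions on $\mathscr{H}\setminus\{h^t\}$: (i) the probability of reaching $h^t$ is the same $p$ under both profiles, since it depends only on play at strict prefixes of $h^t$; (ii) any sample path that never visits $h^t$ contributes identically to both expected payoffs; and (iii) any path that does visit $h^t$ contributes identically to both for periods $1,\ldots,t-1$. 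These three facts yield the factorization
\[
V_{i,\sigma}(\emptyset) - V_{i,(\sigma_{-i},\sigma_i')}(\emptyset) \;=\; p \cdot \delta^{t-1} \cdot \bigl[\, V_{i,\sigma}(h^t) - V_{i,(\sigma_{-i},\sigma_i')}(h^t)\, \bigr].
\]
The Nash condition at the empty history forces the left-hand side to be nonnegative; since $p > 0$ and $\delta^{t-1} > 0$, we may cancel to conclude $V_{i,\sigma}(h^t) \geq V_{i,(\sigma_{-i},\sigma_i')}(h^t)$. This is exactly the one-shot deviation inequality from which the proof of Theorem \ref{thm:ppe} begins.

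From here the argument of Theorem \ref{thm:ppe} applies verbatim: expand both continuation values via the recursion for $V_{i,\cdot}$, apply the $(\epsilon,\gamma)$-differential privacy bound $P_{\sigma(h^t)}(s) \geq \exp(-\epsilon) P_{(a_i',\sigma_{-i}(h^t))}(s) - \gamma$ together with the $\Delta_s$ bookkeeping from that proof, use $V_{i,\cdot} \in [0,1]$ and $\exp(-\epsilon) \geq 1-\epsilon$, and divide by $1-\delta$ to obtain $u_i(\sigma(h^t)) \geq u_i(a_i',\sigma_{-i}(h^t)) - \tfrac{\delta}{1-\delta}(\epsilon+\gamma)$. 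Since $i$ and $a_i'$ were arbitrary, $(\sigma_1(h^t),\ldots,\sigma_n(h^t))$ is an $\eta$-approximate Nash equilibrium of the stage game.

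The main subtlety---modest, but the only non-mechanical step---is justifying the factorization. Its validity rests on publicness of the strategies: the distribution over public histories is determined entirely by the public-strategy prescriptions at each visited history together with the signal kernel, so two profiles that agree as functions on $\mathscr{H}\setminus\{h^t\}$ induce distributions that are coupled to coincide on every event disjoint from $\{h^t\}$ and to agree up through period $t-1$ on the event that $h^t$ is reached. This is also the precise reason the theorem must restrict to histories reached with positive probability: if $p=0$, the factorization collapses both sides to zero and the Nash inequality yields no constraint at $h^t$---reflecting the familiar fact that non-subgame-perfect equilibria may prescribe arbitrary play off the equilibrium path.
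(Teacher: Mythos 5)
Your proposal is correct and follows essentially the same route as the paper: the paper likewise considers the one-shot deviation at the positive-probability history $h^T$, observes that $V_{i,\sigma}(\emptyset)-V_{i,(\sigma_i',\sigma_{-i})}(\emptyset)$ isolates to a term proportional to $\Pr_\sigma[h^T]\cdot\delta^T$ times the conditional payoff difference at $h^T$, divides by $(1-\delta)\Pr_\sigma[h^T]\,\delta^T$, and then applies the $(\epsilon,\gamma)$-differential privacy bound exactly as in the perfect-public-equilibrium argument. Your explicit justification of the factorization via publicness of the strategies, and your remark on why $p>0$ is indispensable, are sound and merely make explicit what the paper leaves implicit.
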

\begin{proof}
We will write $\Pr_\sigma[h^t]$ to denote the probability that a given public history $h^t$ arises when players use strategies $\sigma$. For each $j \leq t$, write $h^{t, \leq j}$ to denote the sub-history of $h^t$ consisting of the first $j$ periods, and $h^{t,j}$ to be the $j^{\text{th}}$ period history. Then:
$$\Pr_\sigma[h^t] = \prod_{j=1}^t P_{\sigma(h^{t,\leq j})}(h^{t,j})$$
Now fix any $T.$ We can write:
$$V_{i, \sigma}(\emptyset) = (1-\delta)\left(\sum_{h^{T} \in S^{T}}\Pr_\sigma[h^{T}]\left(\left(\sum_{t=0}^{T}\delta^tu_i(\sigma(h^{T,\leq t}))+\frac{\delta^{T+1}}{1-\delta}\sum_{s \in S}V_{i, \sigma}(h^{T}, s)P_{\sigma(h^t)}(s)\right)\right)\right)$$

Consider any history $h^T$ such that $\Pr_\sigma[h^{T}] > 0$, and consider the deviation of player $i$, $\sigma_i'$ that is identical to $\sigma_i$, except that on history $h^T$ player $i$ plays a stage-game best response to his opponents. $\sigma_i'(h^T) = \arg\max_{a \in A_i}u_i(a, \sigma_{-i}(h^T)) \equiv a_i^*$. Since $\sigma$ is a Nash equilibrium of the repeated game, we know that $V_{i, \sigma}(\emptyset) - V_{i, (\sigma_i', \sigma_{-i})}(\emptyset) \geq 0$. Since $\Pr_\sigma[h^{T}] > 0$, we can divide and write this difference as:
\begin{eqnarray*}
0 &\leq& \frac{1}{(1-\delta)\Pr_{\sigma}[h^T]}\left(V_{i, \sigma}(\emptyset) - V_{i, (\sigma_i', \sigma_{-i})}(\emptyset)\right) \\
 &=& \left(\delta^T\left(u_i(\sigma(h^T)) - u_i(a_i^*, \sigma_{-i}(h^T))\right) + \frac{\delta^{T+1}}{1-\delta}\sum_{s \in S}V_{i, \sigma}(h^T, s)\left(P_{\sigma(h^t)}(s) - P_{(a_i^*, \sigma(h^T)})(s)\right)\right) \\
 &\leq& \left(\delta^T\left(u_i(\sigma(h^T)) - u_i(a_i^*, \sigma_{-i}(h^T))\right) + \frac{\delta^{T+1}}{1-\delta}\sum_{s \in S}\left(P_{\sigma(h^t)}(s) - P_{(a_i^*, \sigma(h^T))}(s))\right)\right) \\
 &\leq& \left(\delta^T\left(u_i(\sigma(h^T)) - u_i(a_i^*, \sigma_{-i}(h^T))\right) + \frac{\delta^{T+1}}{1-\delta}(\epsilon + \gamma)\right)
\end{eqnarray*}
where the last inequality follows from $(\epsilon,\gamma)$-differential privacy, and the fact that $\exp(-\epsilon) \geq 1-\epsilon$.
Dividing through by $\delta^T$ and rearranging, we find:
$$\left(u_i(a_i^*, \sigma_{-i}(h^T))-u_i(\sigma(h^T))\right) \leq \frac{\delta}{1-\delta}\left(\epsilon + \gamma\right)$$
which completes the proof.
\end{proof}

\subsection{Private Monitoring}
Next, we consider the case of private monitoring. Our theorem here is slightly weaker: that when playing a game with differentially private signals, every equilibrium of the repeated game must at each stage play an approximate \emph{correlated} equilibrium of the stage game. The reason is natural: agents are no longer aware of the history that their opponents are viewing, and so can no longer consider deviations that are best responses to the distributions being played on that day by their opponents. However, they may at any stage consider deviations that base their action on the past private signals that they observe, which gives a posterior distribution on the signals that their opponents observe. Since the distribution on signals that players receive can be correlated; the appropriate stage game solution concept is correlated equilibrium.
\begin{theorem} \label{thm:private}
Fix any repeated game with discount factor $\delta$ and private signals that satisfy $(\epsilon, \gamma)$-differential privacy and no observable deviations (Definition \ref{def:nod}). Let $\sigma = (\sigma_1,\ldots,\sigma_n)$ denote a sequential equilibrium of the repeated game (Definition \ref{def:sequential}). Then for every history $\bh^t \equiv (h_1^t, h_2^t, \ldots, h_n^t)$ that occurs with positive probability when players play according to $\sigma$, the distribution on actions at stage $(\sigma_1(h_1^t),\ldots,\sigma_n(h_n^t))$ forms an $\eta$-approximate correlated equilibrium for
\begin{align*}
\eta = \frac{\delta}{1-\delta} (\epsilon + \gamma).
\end{align*}
\end{theorem}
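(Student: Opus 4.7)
My plan is to mirror the argument used for Theorem~\ref{eqn:seq_public}, adapted to the private-monitoring setting. The first step is to recognize that a sequential equilibrium already delivers, at every private history $h_i^t$ reached with positive probability, a no-profitable-deviation condition for player $i$ taken in expectation over the other players' private histories $h_{-i}^t$ conditional on $h_i^t$. This is exactly the shape of the approximate correlated-equilibrium condition of Definition~\ref{def:ce}, provided one identifies the signal set with private histories, the joint distribution $\mathcal{D}$ with the distribution on $\bh^t$ induced by $\sigma$, and the CE mapping with the restrictions $\sigma_i \colon \mathscr{H}_i \to \Delta A_i$ at stage~$t$.

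Fix a player $i$, a private history $h_i^t$ occurring with positive probability under $\sigma$, and an alternative action $a_i' \in A_i$. I would consider the one-shot deviation $\sigma_i'$ that agrees with $\sigma_i$ everywhere except at $h_i^t$, where it plays $a_i'$, and whose continuation at all subsequent own-histories uses an \emph{artificial} history that records an action freshly drawn from $\sigma_i(h_i^t)$ rather than the actually-played $a_i'$ (the device from the proof of Theorem~\ref{eqn:seq_public}). Under this coupling, all players' subsequent strategies behave exactly as they would under $\sigma$, so the only difference between $V_{i,\sigma}(h_i^t)$ and $V_{i,(\sigma_{-i}, \sigma_i')}(h_i^t)$ is (i) the stage-$t$ utility and (ii) the distribution of the joint period-$t$ signal vector, which changes from $P_{(\sigma_i(h_i^t),\, \sigma_{-i}(h_{-i}^t))}$ to $P_{(a_i',\, \sigma_{-i}(h_{-i}^t))}$ for each $h_{-i}^t$.

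By sequential equilibrium (Definition~\ref{def:sequential}), well-defined thanks to the no-observable-deviations assumption, $V_{i,\sigma}(h_i^t) \geq V_{i,(\sigma_{-i},\sigma_i')}(h_i^t)$. Expanding via the recursion for private monitoring, the difference decomposes into a stage-game term
$$(1-\delta) \sum_{h_{-i}^t} \bigl[\, u_i(\sigma_i(h_i^t), \sigma_{-i}(h_{-i}^t)) - u_i(a_i', \sigma_{-i}(h_{-i}^t))\, \bigr]\, \mathcal{D}(h_{-i}^t \mid h_i^t)$$
plus a continuation term of the form $\delta \sum_{h_{-i}^t} \mathcal{D}(h_{-i}^t \mid h_i^t)\,\bigl(E_{s \sim P_{\ba'}}[W] - E_{s \sim P_{\ba}}[W]\bigr)$, where $W \in [0,1]$ is the continuation-value function on joint private histories and $\ba,\ba'$ differ only in coordinate $i$. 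I would then apply the private-monitoring $(\epsilon,\gamma)$-differential privacy bound to the joint distribution on $S^n$ exactly as in the proof of Theorem~\ref{thm:ppe}: using $\exp(-\epsilon) \geq 1-\epsilon$ on the set where the multiplicative bound is tight, and using that the additive slack aggregates to at most $\gamma$ on the complementary event. This delivers a lower bound of $-(\epsilon+\gamma)$ on each $E[W]$-difference, hence a lower bound of $-\delta(\epsilon+\gamma)$ on the continuation term as a whole.

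Plugging back into the sequential-equilibrium inequality, dividing through by $(1-\delta)$, and rearranging yields
$$\sum_{h_{-i}^t} \bigl[\, u_i(a_i', \sigma_{-i}(h_{-i}^t)) - u_i(\sigma_i(h_i^t), \sigma_{-i}(h_{-i}^t))\, \bigr]\, \mathcal{D}(h_{-i}^t \mid h_i^t) \;\leq\; \frac{\delta}{1-\delta}(\epsilon + \gamma),$$
which is precisely the desired $\eta$-approximate correlated-equilibrium condition. The main technical obstacle will be setting up the artificial-history deviation carefully enough that it induces exactly the same subsequent opponent strategies as $\sigma$ (so that the continuation-value difference really is a pure expectation of a bounded function under two joint signal distributions that differ only in player $i$'s period-$t$ action); once that coupling is in place, the differential-privacy calculation is a direct translation of the public-monitoring argument with $s \in S$ replaced by joint signal vectors $s \in S^n$.
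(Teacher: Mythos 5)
Your proposal is correct and follows essentially the same route as the paper's proof: the same one-shot deviation with the artificial-history coupling, the same decomposition into a stage-game term plus a continuation term involving the difference of joint signal distributions $P_{(a_i,a_{-i})}$ versus $P_{(a_i',a_{-i})}$, and the same application of $(\epsilon,\gamma)$-differential privacy with $\exp(-\epsilon)\geq 1-\epsilon$. The only cosmetic difference is that you quantify over an arbitrary deviation action $a_i'$ while the paper works with the best response $a_i^*$, which are equivalent for establishing the approximate correlated-equilibrium condition.
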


\begin{proof}
Fix a sequential equilibrium $\sigma$. Then, for every player $i$ and every history $h_i^t$, we can define a posterior distribution $\mathcal{D}(i, h_i^t, \sigma_{-i})$ on histories $h_{-i}^t$ observed by the other players. In other words the probability that the other players observed private histories $h_{-i}^t$ given player $i$'s private history is $h_i^t$ and they play strategies $\sigma_{-i}$ is given as $\mathcal{D}(i, h_i^t, \sigma_{-i})(h_{-i}^t)$.

Fix a player $i$ and a period $t$. Consider a private history $h_i^t$ that can arise with positive probability when all players play according to $\sigma$ from periods $1$ to $t-1$. Define
$$a^*_i \in \arg\max_{a_i \in A_i} \sum_{h_{-i}^t}u_i(a_i, \sigma_{-i}(h_{-i}^t)) \mathcal{D}(i, h_i^t, \sigma_{-i})(h_{-i}^t).$$
In words $a^\star_i$ is the period $t$ best response for player $i$ who observes private history $h_i^t$.

Define $\sigma'_i$ to be the strategy that is identical to $\sigma_i$, except on history $h^t_i$, it plays $\sigma_i'(h^t_i) = a_i^*$, and then play in future periods is according to $\sigma_i$  as if an action drawn from $\sigma_i(h^t_i)$ was played in period $t$. Formally, for any period $\tau> t$, with realized history $h_i^\tau$, the deviation involves playing $\sigma_i(h_i^{\prime \tau})$, where $h_i^{\prime \tau}$ is the same as $h_i^\tau$ except in the component corresponding to the action played at time $t$; i.e. $a_i^{\prime t} \sim \sigma_i(h_t^i)$ whereas $a_i^t = a^*$ by definition.

Further, given strategies $\sigma$, and any history $\bh^t = (h_1^t,\ldots,h_n^t)$ for the $n$ players, define $W_{i,\sigma}(\bh^t)$ as:
\begin{align*}
W_{i,\sigma}(\bh^t) =& (1-\delta) u_i(\sigma_i(h_i^t), \sigma_{-i}(h_{-i}^t)) \\
&+ \delta \sum_{s_i,s_{-i},a_i, a_{-i}} W_{i,\sigma}((\bh^t, (s_i,s_{-i}), (a_i,a_{-i}))) \sigma_i(h_i^t)(a_i) \sigma_{-i}(h_{-i}^t)(a_{-i}) P_{(a_i,a_{-i})}[s_i,s_{-i}]
\end{align*}
Note that the argument to $W_i$ is the vector of private histories of all agents, $\bh^t$. It represents the expected discounted value to agent $i$ continuing from a point where his private history is $h_i^t$ \emph{and} the private histories of other players are $h_{-i}^t$. Of course player $i$ only knows his own private history, therefore player $i$'s continuation value playing $\sigma_i$ after a private history $h_i^t$ is
\begin{align*}
V_{i,\sigma}(h_i^t) =& \sum_{h_{-i}^t} W_{i,\sigma}((h_i^t,h_{-i}^t))  \mathcal{D}(i, h_i^t, \sigma_{-i})(h_{-i}^t),\\
\implies V_{i,\sigma}(h_i^t)=& (1- \delta) \sum_{h_{-i}^t}u_i(\sigma_i(h_i^t), \sigma_{-i}(h_{-i}^t)) \mathcal{D}(i, h_i^t, \sigma_{-i})(h_{-i}^t)\\
&+ \delta \sum_{h_{-i}^t,s_i,s_{-i},a_i, a_{-i}} \bigg( W_{i,\sigma}(((h_i^t,h_{-i}^t), (s_i,s_{-i}), (a_i,a_{-i})))\mathcal{D}(i,h_i^t,\sigma_{-i})(h_{-i}^t) \\
& \hphantom{+ \delta \sum_{h_{-i}^t,s_i, a_i, s_{-i}, a_{-i}} \bigg(\,\, }\;\; \sigma_i(h_i^t)(a_i) \sigma_{-i}(h_{-i}^t)(a_{-i}) P_{(a_i,a_{-i})}[s_i,s_{-i}] \bigg).
\end{align*}
Now consider $\sigma_i'$ to be the deviation we described above. Note that by playing $\sigma_i'$ from history $h_i^t$ onwards, player $i$'s expected discounted payoff can be written as:
\begin{align*}
V_{i, (\sigma_{-i}, \sigma_i')}(h_i^t) = &(1-\delta) \sum_{h_{-i}^t}u_i(a_i^*, \sigma_{-i}(h_{-i}^t)) \mathcal{D}(i, h_i^t, \sigma_{-i})(h_{-i}^t)\\
& + \delta \sum_{h_{-i}^t,s_i, s_{-i},a_i,  a_{-i}} \bigg( W_{i,\sigma}(((h_i^t,h_{-i}^t),(s_i,s_{-i}),(a_i,a_{-i})))  \mathcal{D}(i,h_i^t,\sigma_{-i})(h_{-i}^t) \\
& \hphantom{+ \delta \sum_{h_{-i}^t,s_i,  s_{-i}, a_i, a_{-i}} \bigg(\,\, }\;\;\sigma_i(h_i^t)(a_i) \sigma_{-i}(h_{-i}^t)(a_{-i}) P_{(a^*_i,a_{-i})}[s_i,s_{-i}] \bigg)
\end{align*}
Since $\sigma$ forms a sequential equilibrium of the repeated game, for every history $h_i^t$ that occurs with positive probability, we have:
$$V_{i, \sigma}(h_i^t) \geq V_{i, (\sigma_{-i}, \sigma_i')}(h_i^t).$$
Substituting the definitions of $V_{i, \sigma}(h_i^t)$ and $V_{i, (\sigma_{-i}, \sigma_i')}(h_i^t)$ into this inequality, we have:
\begin{align*}
&  \sum_{h_{-i}^t}\left (u_i(a^*, \sigma_{-i}(h_{-i}^t)) -  u_i(\sigma_i(h_i^t), \sigma_{-i}(h_{-i}^t))   \right)\mathcal{D}(i, h_i^t, \sigma_{-i})(h_{-i}^t) \\
\leq  & \frac{\delta}{1-\delta} \sum_{h_{-i}^t,s_i,  s_{-i},a_i, a_{-i}} \bigg( W_{i,\sigma}(((h_i^t,h_{-i}^t),(s_i,s_{-i}),(a_i,a_{-i})))  \mathcal{D}(i,h_i^t,\sigma_{-i})(h_{-i}^t) \\
& \hphantom{+ \delta \sum_{h_{-i}^t,s_i, s_{-i}, a_i, a_{-i}} \bigg(\,\,\; }\;\;\;\sigma_i(h_i^t)(a_i) \sigma_{-i}(h_{-i}^t)(a_{-i}) \left(P_{a_i,a_{-i}}[s_i,s_{-i}] - P_{(a^*_i,a_{-i})}[s_i,s_{-i}]\right) \bigg)
\end{align*}
By $(\epsilon,\gamma)$-differential privacy of the private signal $P_{\ba}(\mathbf{s})$, and the fact that $\exp(-\epsilon) \geq 1-\epsilon$, therefore, we have
\begin{align*}
\sum_{h_{-i}^t}\left (u_i(a_i^*, \sigma_{-i}(h_{-i}^t)) -  u_i(\sigma_i(h_i^t), \sigma_{-i}(h_{-i}^t))   \right)\mathcal{D}(i, h_i^t, \sigma_{-i})(h_{-i}^t)  \leq & \frac{\delta}{1-\delta} (\epsilon + \gamma).
\end{align*}
By Definition \ref{def:ce}, therefore, the strategies $\sigma_i$ involve the play of $\frac{\delta}{1-\delta} (\epsilon + \gamma)$-approximate correlated equilibrium in every period: the analogue of the private signal observed by players in that definition is the private history $h_i^t$, which are correlated via the strategies $\sigma_i$ and monitoring structure.
\end{proof}

\section{Examples of Games with Signal Privacy} \label{sec:examples}

The rest of this paper will be devoted to showing by example that such differentially private signal structures may naturally arise in large games. In particular, we will show that if the number of players $n$ is large, $\epsilon$ and $\gamma$ may naturally be small. This suggests that in large societies, inter-temporal transfers may not support ``much more'' than repeated play of the static equilibria of the stage game.

Gaussian (i.e. normally distributed) mean-zero noise is a natural form of perturbation that arises in many settings of impartial information (it is the limiting distribution of many independent random processes). It also happens to be a noise distribution that guarantees differential privacy. Formally, a theorem from \citeasnoun{DKMMN06} will be useful.%
\footnote{A proof of this theorem in the form stated here can be found in Appendix A of \citeasnoun{HR12}.}
First we define the ($L^2$) sensitivity of a function.
\begin{definition}
A function $f: T^n \rightarrow \Re^d$ has $L^2$ sensitivity $s$ if for any $\mathbf{t} \in T^n$, $ i \leq n$ and $t' \in T$, we have that
\begin{align*}
\lVert f(\mathbf{t}) - f (t', t_{-i})   \rVert_2 \leq s,
\end{align*}
where $\lVert \cdot \rVert_2$ is the standard Euclidean norm in $\Re^d$.
\end{definition}

\begin{theorem}[\citeasnoun{DKMMN06}]\label{thm:DKMMN06} Suppose a function $f:T^n \rightarrow \Re^d$ has $L^2 $ sensitivity $s$. Then the algorithm that computes $f(\mathbf{t}) + Z$ is $(\epsilon, \gamma)$-differentially private, where $Z \in \Re^d$ is a random vector where each coordinate is drawn i.i.d. from the normal distribution $N(0, \sigma^2)$ for
\begin{align*}
\sigma = \frac{s}{\epsilon}  \sqrt{\log\left(\frac{1.25}{\gamma} \right)}.
\end{align*}
\end{theorem}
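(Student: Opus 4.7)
The plan is to establish this via the standard three-step argument for the Gaussian mechanism.

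\textbf{Step 1: Reduce to the one-dimensional case.} Fix neighboring inputs $\mathbf{t}, \mathbf{t}' \in T^n$ differing in a single coordinate, and let $v = f(\mathbf{t}') - f(\mathbf{t})$, so $\|v\|_2 \leq s$ by the sensitivity assumption. Since $Z$ is a spherical Gaussian, its law is invariant under orthogonal rotations. Choose a rotation $R$ sending $v$ to $\Delta \cdot e_1$ where $\Delta = \|v\|_2 \leq s$; after applying $R$ to the output, only the first coordinate sees any difference between the two input distributions, while all other coordinates are distributed identically. Thus it suffices to prove $(\epsilon,\gamma)$-indistinguishability between the densities $p$ of $N(0,\sigma^2)$ and $q$ of $N(\Delta,\sigma^2)$ on $\Re$ for any $|\Delta| \leq s$.

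\textbf{Step 2: Reduce to a tail bound on the privacy-loss random variable.} Define $L(x) := \ln(p(x)/q(x))$; a direct calculation gives
\begin{align*}
L(x) = \frac{\Delta^2 - 2x\Delta}{2\sigma^2}.
\end{align*}
I would then invoke the standard lemma: if $B \subseteq \Re$ satisfies $\Pr_{X \sim p}[X \in B] \leq \gamma$ and $L(x) \leq \epsilon$ for every $x \notin B$, then for every measurable event $E$,
\begin{align*}
\Pr{}_{\!p}(E) \;=\; \Pr{}_{\!p}(E \cap B) + \Pr{}_{\!p}(E \cap B^c) \;\leq\; \gamma + e^\epsilon \Pr{}_{\!q}(E),
\end{align*}
which is one side of the $(\epsilon,\gamma)$-DP inequality. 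The matching lower bound follows from a symmetric argument with the roles of $p$ and $q$ interchanged (equivalently, $\Delta$ replaced by $-\Delta$).

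\textbf{Step 3: Verify the tail bound.} Take $B = \{x : L(x) > \epsilon\}$. Assuming without loss of generality that $\Delta > 0$, the inequality $L(x) > \epsilon$ rearranges to $x < \Delta/2 - \epsilon \sigma^2/\Delta$, and so
\begin{align*}
\Pr{}_{\!X \sim N(0,\sigma^2)}[X \in B] \;=\; \Phi\!\left(\frac{\Delta}{2\sigma} - \frac{\epsilon\sigma}{\Delta}\right).
\end{align*}
Substituting the worst case $\Delta = s$ and the prescribed $\sigma = (s/\epsilon)\sqrt{\log(1.25/\gamma)}$, and applying the Mills-ratio tail bound $\Phi(-t) \leq \frac{1}{t\sqrt{2\pi}}\exp(-t^2/2)$ for $t > 0$, I would check that the right-hand side is at most $\gamma$.

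The main obstacle is precisely the bookkeeping in Step 3: one must confirm that $\sqrt{\log(1.25/\gamma)} - \epsilon/(2\sqrt{\log(1.25/\gamma)})$ exceeds the $\gamma$-quantile of the standard normal, and this is exactly what pins down the constant $1.25$ (the calculation is typically carried out under the mild restriction $\epsilon \leq 1$ so that the second, negative term is a lower-order correction). Since the theorem is invoked here only as a black-box tool for generating examples in Section~\ref{sec:examples}, I would refer the reader to the fully worked-out proof in Appendix~A of \citeasnoun{HR12} for the constant-chasing details.
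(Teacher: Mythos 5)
The paper does not actually prove this statement: it is imported verbatim from \citeasnoun{DKMMN06}, with a footnote deferring the proof to Appendix A of \citeasnoun{HR12}. Your argument is precisely the standard proof given in that reference --- rotation-invariance reduction to one dimension, the privacy-loss tail lemma, and a Gaussian tail estimate --- so at the level of approach there is nothing to compare, and Steps 1 and 2 are correct as written (including the factorization of the likelihood ratio across coordinates and the symmetric argument for the lower-bound direction of the definition).

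The one genuine issue is that the verification you defer in Step 3 does not close with $\sigma$ as literally printed in the theorem. Writing $\sigma = (s/\epsilon)c$ and taking the worst case $\Delta = s$, your bad event has probability $\Phi\left(-\left(c - \tfrac{\epsilon}{2c}\right)\right)$, which the Mills-ratio bound controls by roughly $e^{-c^2/2}$ up to a subpolynomial prefactor. For this to be at most $\gamma$ one needs $c^2 \approx 2\ln(1.25/\gamma)$; with the stated $c = \sqrt{\log(1.25/\gamma)}$ one only gets $e^{-c^2/2} = \sqrt{\gamma/1.25}$, which exceeds $\gamma$ once $\gamma$ is small. In other words the statement as transcribed here is missing a factor of $2$ inside the square root (the form proved in \citeasnoun{HR12} is $\sigma = (s/\epsilon)\sqrt{2\ln(1.25/\gamma)}$, under the additional restriction $\epsilon < 1$ that you correctly anticipate). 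Your strategy is the right one and succeeds for the corrected constant; since the paper only uses the theorem through the asymptotic conclusion of Corollary \ref{cor:sensitive}, the discrepancy is harmless downstream, but you should either restore the factor of $2$ or flag that the constant-chasing in Step 3 fails for the statement as literally given.
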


The following corollary will be useful in our analysis.
\begin{corollary} \label{cor:sensitive}
Suppose a function $f:T^n \rightarrow \Re^d$ has $L^2 $ sensitivity $s(n)$ (here we assume the sensitivity is a function of $n$, which should be thought of as diminishing in $n$). Then for any constant $\sigma$, the algorithm that computes $f(\mathbf{t}) + Z$ where $Z \in \Re^d$ is a random vector where each  is drawn i.i.d. from the normal distribution $N(0, \sigma^2)$ is $(\epsilon, \gamma)$-differentially private for
\begin{align*}
\epsilon + \gamma \in O\left(s(n) \sqrt{\log \tfrac{1}{s(n)} } \right)
\end{align*}
In particular if the sensitivity $s(n) \in O(\frac1n)$, we have $\epsilon + \gamma \in O(\frac{\sqrt{\log n}}{n})$
\end{corollary}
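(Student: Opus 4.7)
The plan is to simply invert the parameter relationship in Theorem \ref{thm:DKMMN06}. That theorem says that for fixed sensitivity $s$, the noise level required to obtain $(\epsilon,\gamma)$-differential privacy is $\sigma = (s/\epsilon)\sqrt{\log(1.25/\gamma)}$. Rearranging, for a fixed noise level $\sigma$ and sensitivity $s(n)$ we obtain $(\epsilon,\gamma)$-differential privacy whenever
\[
\epsilon \;=\; \frac{s(n)\sqrt{\log(1.25/\gamma)}}{\sigma}.
\]
So the question becomes: what is the optimal (or at least a near-optimal) choice of $\gamma$ as a function of $s(n)$ to minimize $\epsilon+\gamma$?

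The key observation is that $\epsilon$ depends on $\gamma$ only through the slowly-growing factor $\sqrt{\log(1/\gamma)}$, whereas $\gamma$ itself appears linearly in the sum. So I would choose $\gamma$ as a small polynomial in $s(n)$, for concreteness $\gamma := s(n)$. Then $\log(1.25/\gamma) = \log(1/s(n)) + O(1)$, giving
\[
\epsilon \;=\; \frac{s(n)}{\sigma}\sqrt{\log(1/s(n)) + O(1)} \;=\; O\!\left(s(n)\sqrt{\log\tfrac{1}{s(n)}}\right),
\]
where the $\sigma$ is absorbed into the constant since $\sigma$ is fixed. Adding $\gamma = s(n)$ (which is of strictly smaller order) yields the advertised bound $\epsilon+\gamma \in O\bigl(s(n)\sqrt{\log(1/s(n))}\bigr)$.

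For the second part, substitute $s(n)\in O(1/n)$. Then $\log(1/s(n)) = \log n + O(1)$, so $\sqrt{\log(1/s(n))} = O(\sqrt{\log n})$, and therefore $\epsilon+\gamma \in O(\sqrt{\log n}/n)$, as claimed.

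There is essentially no hard step here: the corollary is purely a reformulation of Theorem \ref{thm:DKMMN06} in a more convenient regime (fixed $\sigma$, vanishing sensitivity) obtained by solving for $\epsilon$ in terms of $\gamma$ and balancing. The only mild subtlety is the balancing choice of $\gamma$; any $\gamma$ that is polynomially related to $s(n)$ would give the same asymptotic bound, so no delicate optimization is required.
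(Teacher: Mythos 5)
Your derivation is correct and is exactly the intended one: the paper states the corollary without proof, as an immediate consequence of Theorem \ref{thm:DKMMN06} obtained by fixing $\sigma$, setting $\gamma$ to be polynomially related to $s(n)$, and solving for $\epsilon$. The balancing choice $\gamma = s(n)$ and the observation that only the $\sqrt{\log(1/\gamma)}$ factor depends on $\gamma$ are precisely the (mild) content of the corollary, so there is nothing to add.
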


\subsection{Large Anonymous Games}
A class of games that has received much attention in the literature is the class of Large Anonymous games. In these games, all players have the same set of available actions and each player's payoff depends only on the action he takes and the histogram of actions taken by others (and not on the identity of the players).%
\footnote{Papers in this literature often refer to such games simply as Large games.}

Consider a repeated game that has such a large anonymous game as a stage game. A player's payoff from playing action $a \in A$, when the population distribution of plays by others is $\alpha \in \Delta A$ is $u_i(a,\alpha) \in [0,1]$. Suppose further that in every period, a noisy histogram of the actions taken by all agents is announced, i.e. when the realized distribution of plays is $\alpha \in \Delta A$, the announced distribution is $\alpha + Z$ for $Z \in \Re^{|A|}$, with each component of $Z \sim N(0,\sigma^2)$. Then, fixing players' discount factor $\delta$, by Corollary \ref{cor:sensitive} and Theorem \ref{thm:ppe} that any perfect public equilibrium of the repeated game must involve each player playing an $\eta-$approximate Nash equilibrium of the stage game for $$\eta \in O\left(\frac{\delta}{1-\delta} \frac{\sqrt{\log n}}{n} \right).$$
Note that as $n$ tends large, $\eta \rightarrow 0$.

Note that a similar noise distribution results if $\alpha \in \Delta A$ is computed by \emph{subsampling} player actions. It can similarly be shown that the distribution that results from subsampling (sufficiently to get a constant error rate in the action histogram) results in an $(\epsilon,\gamma)$ differential privacy guarantee with $(\epsilon+\gamma)$ tending to $0$ as $n$ grows large. 

\subsection{Noisy Cournot Games}
One of the classic empirical applications of repeated games is to study the possibility of collusion in repeated oligopolistic competition---see \citeasnoun{green1984noncooperative} for a classic reference. We demonstrate via their model that in a large oligopoly, collusion may be impossible.

The stage game is $n$-firm Cournot competition. Each firm $i$ simultaneously chooses quantity $q_i \in [0,1]$ to produce, at a cost of $c(q_i)$. For simplicity in calculations, take $c(q_i) = q_i$. After all firms have selected quantities, a price is determined from the (realized, stochastic) demand as
\begin{align*}
p = \theta P\left(\frac{1}{n} \sum_{i=1}^n q_i \right),
\end{align*}
where $P(\cdot)$ is a continuously differentiable decreasing demand function and $\theta$ is the realized demand shock. We assume that $\theta$ is distributed log-normally with $\mathbb{E}[\theta]=1$. Each firm $i$ then realizes a profit of its revenue less cost, i.e., $pq_i - c(q_i)$. Further, only this price $p$ is observed publicly by all the firms---firms do not directly observe the quantities produced by other firms. Taking logarithms, we have that
\begin{align*}
\log p = \log \theta + \log P\left(\frac{1}{n} \sum_{i=1}^n q_i \right).
\end{align*}

For $n$ large, by a Taylor series expansion, we have that the sensitivity of $\log P(\cdot)$  is $$\frac1n \sup_{x \in [0,1]}\frac{P'(x)}{P(x)}.$$
Note that fixing a function $P$, this quantity is diminishing at a rate of $O(1/n)$. Note also that since $\theta$ is log-normally distributed, $\log \theta$ is normally distributed. Therefore by Corollary \ref{cor:sensitive} and Theorem \ref{thm:ppe}, any Perfect Public equilibrium of the stage game must involve the play of approximate Nash Equilibrium of the stage game in every period for $\eta \in O\left( \frac{\delta}{1-\delta} \sqrt{\frac{\log n}{n}}\right)$.

\subsection{Noisy Counterfactual Payoffs}
Now let us consider an example of an imperfect private monitoring setting. Fix a $n$ player game, and assume that the game is $\mu$--sensitive, i.e. for every player $i$ and every player $j \neq i$, $j$ changing the action he plays can affect player $i$'s payoff by at most $\mu$ \emph{regardless} of what everyone else is playing:
\begin{align*}
\forall i \neq j, a_i, a_j, a_j', a_{-ij}: \quad \lvert u_i(a_i, a_j, a_{-ij}) - u_i(a_i, a'_j, a_{-ij} )\rvert \leq \mu.
\end{align*}
After each period, each player receives a private signal of a (noisy) estimate of the payoff she would have received in the past period for each of the actions she could have taken, i.e. $S_i =\Re^{|A_i|}$, and when other players play $a_{-i}$, player $i$ receives a signal where the component corresponding to action $a$ is $u_i(a, a_{-i}) + N(0,\sigma^2).$ Such a signal structure has been widely studied in both the game theory and machine learning literatures. In the game theory literature, \citeasnoun{hart2000simple} and \citeasnoun{foster1997calibrated} show that simple heuristic strategies (minimizing regret) played in this setting will result in the empirical distribution of play converging to the set of correlated equilibria, see e.g. Chapter $4$ of \citeasnoun{nisan2007algorithmic} for a survey and the connections to machine learning.

We note that in a $k$-action game, the resulting signal distribution is over a vector of length $k\cdot n$ (reporting the payoff to each of the $n$ players for each of their $k$ actions). Since each action is individually $\mu$ sensitive, the $\ell_2$ sensitivity of the entire vector of payoffs is $O(\sqrt{nk}\mu)$. Fixing players' discount factor $\delta$, by Corollary \ref{cor:sensitive} and Theorem \ref{thm:ppe} it follows that any perfect public equilibrium of the repeated game must involve each player playing an $\eta-$approximate Correlated equilibrium of the stage game for
$$\eta \in O\left(\frac{\delta}{1-\delta} \mu \sqrt{n\cdot k\log \frac{1}{\mu}} \right).$$
Note that as $n$ tends large, if $\mu \ll \frac{1}{\sqrt{n}}$, $\eta \rightarrow 0$.

\subsection{Other Settings with Private Signaling} \label{sec:otherres}
Finally, we remark that the three settings we have noted in which the signals are naturally differentially private (noisy anonymous games, noisy Cournot games, and noisy counterfactual payoffs) are not isolated examples. Differential privacy is a powerful measure of ``influence'' in part because it enjoys a strong ``composition'' property. A sequence of $(\epsilon, \gamma)$-differentially private distrubutions (even chosen adaptively, as a function of the realized values of previous distributions) remain $(\epsilon', \gamma')$-differentially private, for values of $\epsilon'$ and $\gamma'$ depending on the number of compositions and the parameters $\epsilon$ and $\gamma$. Informally, the $\gamma$'s ``add up'' linearly, whereas the $\epsilon$ parameters increase only with the square root of the number of such compositions.\footnote{It is this property that makes differential privacy a more appealing notion of influence than mere statistical distance (which corresponds to $(0, \gamma)$-differential privacy). If we were to talk only about statistical distance, then the degradation of the statistical distance parameter would always be linear in the number of compositions, whereas the degradation in the $\epsilon$ parameter in differential privacy can be much slower.} Formally, we have the following theorem due to \citeasnoun{DRV10}:

\begin{theorem}[Composition for differential privacy, \citeasnoun{DRV10}]
Let $M_1(D),\ldots , M_k(D)$ be families of distributions such that each $M_i$ is $(\epsilon_i,\gamma_i)$-differentially private with
$\epsilon_i \leq \epsilon'$.  Then for any $\gamma \in [0,1]$: $M(D) = (M_1(D), \ldots , M_k(D))$ is
$(\epsilon,\gamma + \sum_{i=1}^k \gamma_i)$-private for
\[
  \epsilon = \sqrt{2 \log(1 / \gamma) k}\epsilon' + k\epsilon'(e^{\epsilon'} - 1)
\]
$M(D)$ also satisfies $(\epsilon,\gamma)$ differential privacy for $\epsilon = \sum_{i=1}^k \epsilon_i$ and $\gamma = \sum_{i=1}^k \gamma_i$.\footnote{For large values of $k$, this simpler statement is substantially weaker than the more sophisticated composition statement. However, it can be stronger for small values of $k$.} Moreover, these compositional properties hold even if each distribution $M_i$ is chosen adaptively, i.e. only after observing the realized values of distributions $M_1(D),\ldots,M_{i-1}(D)$.
\end{theorem}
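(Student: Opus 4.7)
The plan is to analyze the composition via the \emph{privacy loss random variable} (PLRV): for neighboring databases $D, D'$ and a realized sequence of outputs $v = (v_1, \ldots, v_k)$ drawn from $M(D)$, define
\[
Z_i = \ln \frac{\Pr[M_i(D) = v_i \mid v_{1:i-1}]}{\Pr[M_i(D') = v_i \mid v_{1:i-1}]},
\]
so that the log-ratio of the joint densities of $M(D)$ and $M(D')$ equals $\sum_i Z_i$. The standard reduction is that $M$ is $(\epsilon, \gamma^*)$-differentially private whenever $\Pr[\sum_i Z_i > \epsilon] \leq \gamma^*$ (up to additive slack that absorbs the $\sum_i \gamma_i$ contribution from $(\epsilon_i,\gamma_i)$-DP). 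The task therefore reduces to proving a high-probability upper bound on $\sum_i Z_i$.

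For the simple bound $(\sum_i \epsilon_i, \sum_i \gamma_i)$, I would argue that with probability at least $1 - \gamma_i$ over the draw of $v_i$ we have $|Z_i| \leq \epsilon_i$ pointwise; this uses the auxiliary fact that any $(\epsilon_i, \gamma_i)$-DP mechanism can be coupled with a pure $(\epsilon_i, 0)$-DP mechanism such that the coupling fails with probability at most $\gamma_i$. A union bound over $i$ yields the simple claim. For the advanced bound, first specialize to pure $(\epsilon', 0)$-DP mechanisms, so $|Z_i| \leq \epsilon'$ deterministically. The key technical lemma is that $\mathbb{E}[Z_i \mid v_{1:i-1}] \leq \epsilon'(e^{\epsilon'} - 1)$; this is a KL-divergence bound between $\epsilon'$-DP distributions and follows from a short computation using $\ln x \leq x - 1$ together with both sides of the DP inequality. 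Since this bound holds pointwise in $v_{1:i-1}$, it is preserved under adaptive choice of $M_i$. Decomposing
\[
\sum_{i=1}^k Z_i = \sum_{i=1}^k \bigl(Z_i - \mathbb{E}[Z_i \mid v_{1:i-1}]\bigr) + \sum_{i=1}^k \mathbb{E}[Z_i \mid v_{1:i-1}],
\]
the first sum is a martingale difference sequence with increments bounded by $\epsilon'$ in absolute value, so Azuma--Hoeffding delivers a tail bound of $\epsilon' \sqrt{2k \log(1/\gamma)}$ except with probability $\gamma$, while the second sum is at most $k \epsilon'(e^{\epsilon'} - 1)$ deterministically. Adding these yields exactly the $\epsilon$ stated in the theorem.

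To lift the advanced bound from pure DP back to general $(\epsilon_i, \gamma_i)$-DP mechanisms, couple each $M_i$ with a pure $\epsilon_i$-DP mechanism that agrees with it except on an event of probability at most $\gamma_i$; then apply the pure-case argument to the coupled mechanisms and union-bound over coupling failures, which adds $\sum_i \gamma_i$ to the final failure probability. The main obstacle is the conditional mean bound $\mathbb{E}[Z_i \mid v_{1:i-1}] \leq \epsilon'(e^{\epsilon'} - 1)$: a naive application of $|Z_i| \leq \epsilon'$ only yields $\mathbb{E}[Z_i] \leq \epsilon'$ and entirely misses the quadratic-in-$\epsilon'$ drift that is the source of the advanced composition's improvement over the simple composition. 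Getting the tight bound requires exploiting the reverse DP inequality $\Pr[M_i(D') = v_i \mid v_{1:i-1}] \leq e^{\epsilon'} \Pr[M_i(D) = v_i \mid v_{1:i-1}]$ to control the expectation when $v_i$ is drawn from $M_i(D)$; this is the step I expect to be subtle, and everything else is fairly mechanical concentration and union-bounding.
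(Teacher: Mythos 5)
The paper offers no proof of this statement---it is imported verbatim from \citeasnoun{DRV10} as a black box---so there is no internal argument to compare against; the relevant benchmark is the original DRV10 proof, and your outline is essentially a faithful reconstruction of it: privacy loss random variables, the per-round expected-loss (KL) bound $\epsilon'(e^{\epsilon'}-1)$, Azuma--Hoeffding on the centered sum, and a coupling to absorb the $\gamma_i$'s. Two details deserve care if you write it out in full. First, the martingale increments $Z_i - \mathbb{E}[Z_i \mid v_{1:i-1}]$ are \emph{not} bounded by $\epsilon'$ in absolute value: since $Z_i \in [-\epsilon',\epsilon']$ and the conditional mean can sit anywhere in that range, the centered increment only lies in an interval of length $2\epsilon'$. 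You therefore need the interval-length form of Azuma--Hoeffding, $\Pr[\sum_i X_i \geq t] \leq \exp(-2t^2/\sum_i d_i^2)$ with $d_i = 2\epsilon'$, which does land exactly on $t = \epsilon'\sqrt{2k\log(1/\gamma)}$---so the constant in the theorem survives, but not for the reason you state. Second, the coupling fact you invoke (that an $(\epsilon_i,\gamma_i)$-DP mechanism agrees with a pure $\epsilon_i$-DP one except on an event of probability at most $\gamma_i$, \emph{simultaneously} for both neighboring inputs) is itself a nontrivial characterization lemma---it is precisely where the two-sided $\gamma$-approximate max-divergence hypothesis gets used, and it is the step in DRV10 that requires an explicit construction rather than a one-line argument; it should be stated and proved or cited, not treated as folklore. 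With those two points tightened, your plan correctly yields both the basic bound $(\sum_i\epsilon_i,\sum_i\gamma_i)$ and the advanced bound, and your observation that the conditional-mean bound holds pointwise in $v_{1:i-1}$ is exactly what makes the argument survive adaptive choice of the $M_i$.
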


Most immediately, this means that our results hold for any combination of the signalling schemes discussed so far: for example, agents could observe \emph{both} a noisy price, \emph{and} a noisy histogram of actions that were played, and the resulting composite signal would remain differentially private. More generally, a large number of simple noisy ``primitives'' guarantee differential privacy: we have already seen that the natural operation of adding Gaussian noise to a low sensitivity vector is differentially private. However, many other distributions guarantee differential privacy as well, including the ``Laplace'' distribution (which is a symmetric exponential distribution, and gives an even stronger guarantee of $(\epsilon, 0)$-differential privacy compared to Gaussian noise) \cite{DMNS06}. Other simple operations such as subsampling $k$ out of $n$ player actions guarantee $(0, k/n)$-differential privacy.

Moreover, differential privacy does not degrade with post-processing. The following fact is a simple consequence of the definition:
\begin{theorem}[Post-processing]
Let $M(D)$ be an $(\epsilon,\gamma)$-differentially private family of distributions over some range $\mathcal{O}$, and let $f:\mathcal{O}\rightarrow \mathcal{O}'$ be any randomized mapping. Then $f(M(D))$ is an $(\epsilon,\gamma)$-differentially private family of distributions over $\mathcal{O}'$.
\end{theorem}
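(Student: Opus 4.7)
The plan is to reduce the randomized post-processing statement to the deterministic case, and then to unpack the definition of $(\epsilon,\gamma)$-differential privacy directly on the preimage. Since differential privacy is a statement about every measurable event in the output space, the natural move is to pull back events through the post-processing map.

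First, I would represent the randomized mapping $f$ in the standard way as $f(o) = g(o,R)$, where $g : \mathcal{O} \times \mathcal{R} \to \mathcal{O}'$ is a deterministic measurable function and $R$ is a random seed drawn from some distribution $\mu$ on $\mathcal{R}$, independent of $M(D)$. For each fixed $r \in \mathcal{R}$, the slice $g_r := g(\cdot, r) : \mathcal{O} \to \mathcal{O}'$ is a deterministic map. Then for any databases $D, D'$ that differ in one entry, and any event $E' \subseteq \mathcal{O}'$, independence gives
\begin{align*}
\Pr[f(M(D)) \in E'] \;=\; \int_{\mathcal{R}} \Pr\!\bigl[M(D) \in g_r^{-1}(E')\bigr]\, d\mu(r).
\end{align*}

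Second, I would handle the deterministic case pointwise in $r$. For each $r$, the set $E_r := g_r^{-1}(E')$ is a measurable event in $\mathcal{O}$, so applying the definition of $(\epsilon,\gamma)$-differential privacy of $M$ to the event $E_r$ yields
\begin{align*}
\exp(-\epsilon)\Pr[M(D') \in E_r] - \gamma \;\leq\; \Pr[M(D) \in E_r] \;\leq\; \exp(\epsilon)\Pr[M(D') \in E_r] + \gamma.
\end{align*}

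Third, I would integrate these pointwise bounds against $\mu$. Because $\int d\mu(r) = 1$, the additive $\gamma$ term is preserved under integration, and because $\exp(\pm\epsilon)$ is a constant factor it passes through the integral. Combining with the displayed identity, I get
\begin{align*}
\exp(-\epsilon) \Pr[f(M(D')) \in E'] - \gamma \;\leq\; \Pr[f(M(D)) \in E'] \;\leq\; \exp(\epsilon) \Pr[f(M(D')) \in E'] + \gamma,
\end{align*}
which is exactly $(\epsilon,\gamma)$-differential privacy for $f \circ M$. Since $E'$ was arbitrary, the claim follows. There is no real obstacle here; the only subtlety is making sure the additive $\gamma$ slack does not blow up under the integration over the randomness of $f$, which it does not because $\mu$ is a probability measure and the bound is a pointwise affine inequality in the probability of $E_r$.
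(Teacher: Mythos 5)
Your proof is correct; the paper itself states this theorem without proof, calling it ``a simple consequence of the definition,'' and your argument is exactly the standard way to fill that in: pull back the event $E'$ through each deterministic slice $g_r$ of the randomized map, apply the $(\epsilon,\gamma)$-privacy bound to the measurable preimage $E_r = g_r^{-1}(E')$, and integrate over the independent seed, noting that the affine bound $e^{\pm\epsilon}p \pm \gamma$ passes through the probability measure $\mu$ intact. You also correctly flag the one point that actually requires care --- that the additive $\gamma$ does not accumulate under the integration --- so there is nothing to add.
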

What this means is that any signal distribution that can be described as a differentially private distribution (resulting, from e.g. the composition theorem above), followed by post-processing (for example, we could have a noisy estimate of the fraction of players playing each action, \emph{truncated and renormalized} to form a distribution) remains differentially private.

As a result of the composition theorem and the post-processing theorem, differentially private signaling distributions are not difficult to come by in natural settings. Most noisy settings can be seen to satisfy $(\epsilon,\gamma)$-privacy for some finite values of $\epsilon$ and $\gamma$. Our theorems apply when these parameters $\epsilon$ and $\gamma$ tend to zero as $n$ grows large -- and as we have seen, this is generally the case when the noise rate is fixed (independent of $n$), and the game is large.

\section{Conclusions and Open Questions}

We have shown that in large games with (public or private) imperfect monitoring, it can naturally arise that the signaling structure satisfies the constraint of ``differential privacy'', in which the privacy parameters $\epsilon$ and $\gamma$ tend to 0 as the size $n$ of the game grows large. Moreover, we have shown that whenever this is the case the equilibria of the repeated game collapse to the set of equilibria of the stage game in large populations.  This conclusion holds for a broad class of equilibria of the repeated game, that e.g.~need not necessarily even be subgame perfect.  With public monitoring, the equilibrium set collapses to the set of \emph{Nash} equilibria of the stage game, and with private monitoring, the equilibrium set collapses to the set of \emph{correlated} equilibria of the stage game.

This ``anti-folk-theorem'' for large, noisy repeated games has several interpretations. On the one hand, it means that cooperative, high welfare folk-theorem equilibria which cannot be supported as equilibria of the stage game are also impossible to support in repeated games from this natural class. On the other hand, it also eliminates low-welfare equilibria of the repeated game that are not supported in the stage game, and so in general, \emph{improves} the \emph{price of anarchy} when computed over the set of Nash equilibria of the repeated game. Our theorem can also be seen as an argument that equilibria of the repeated game can be more predictive in large noisy games, than in general repeated games. By removing the bite of the folk theorem, the multiplicity of possible equilibria of the repeated game can be greatly reduced, in many cases to a single (or small number of) Nash equilibria of the stage game.

There remain many interesting questions along this line of inquiry. For example, is there an even more general condition than differential privacy that can yield the same results? 
Similarly, there is the question of whether our theorems are qualitatively tight. Certainly they are in public monitoring settings -- it is always possible to support a sequence of stage game Nash equilibria as an equilibrium of the repeated game, and we have shown that this is essentially the \emph{only} class of equilibria that can be supported in the games that we study. However, in the private monitoring setting, we have shown only that equilibria of the repeated game must consist of a series of (approximate) correlated equilibria of the stage game. But is it the case that for every stage game and every sequence of correlated equilibria of that stage game, there exists a private signaling structure that does indeed support the sequence of correlated equilibria as an equilibrium of the repeated game?

More broadly, analysis of ``large'' dynamic and repeated games is easier because each player's action is close to myopic/ non-strategic. Several different notions of largeness have been considered in the literature. For example, \citeasnoun{al2001large} use concentration inequalities to show that only a small number of ``pivotal'' players will play non-myopically/ strategically in any large game (see also more recent results by \citeasnoun{gradwohl2008fault}). In an influential paper, \citeasnoun{mclean2002informational} define a measure of they term ``informational size'' which bounds the amount of information any single player has about a common variable of interest (in an $L^1$ sense), conditional on knowing the information of all other players. Differential privacy has promise as an alternate measure of ``largeness.'' Further the underlying literature studying properties of differentially private processing (recall, e.g., the results surveyed in Section \ref{sec:otherres}) allow for easy use in applications of interest.

\bibliographystyle{econometrica}
\bibliography{repeated}
\appendix
\section*{APPENDIX}
\setcounter{section}{1}
\begin{proof}[\textsc{Proof of Theorem  \ref{eqn:seq_public}}]
Fix a sequential equilibrium $\sigma = (\sigma_1,\ldots,\sigma_n)$. For every player $i$ and every history $h_i^t = h_i^t(a_i^t, s^t)$, we can define a posterior distribution $\mathcal{D}(i, h_i^t, \sigma_{-i})$ on histories $h_{-i}^t$ observed by the other players. We note several things: first, since the public signals $s^t$ are fixed, $\mathcal{D}(i, h_i^t, \sigma_{-i})$ defines a distribution only over the actions $a_{j}^t$ played by each of the other players $j \neq i$. Second, since the actions $a_j^t$ played by the other players are a function only of their own past actions and the public signals $s^t$, they are conditionally independent of actions $a_i^t$ fixing the public history $s^t$. Hence, we can write $\mathcal{D}(i, h_i^t, \sigma_{-i}) \equiv \mathcal{D}(i, s^t, \sigma_{-i})$. Finally, equivalently note that the distribution over actions $a_j^t, a_k^t$ of any pair of players $j \neq k \neq i$ are conditionally independent fixing the public signals $s^t$, and so we can decompose $\mathcal{D}(i, s^t, \sigma_{-i})$ as a product distribution:
$$\mathcal{D}(i, s^t, \sigma_{-i}) = \prod_{j \neq i} \mathcal{D}_j(i, s^t, \sigma_{-i})$$
where $\mathcal{D}_j(i, s^t, \sigma_{-i})$ represents the marginal distribution on histories $h_j^t$ of player $j$ fixing $s^t$.

Fix any set of public signals $s^t$, and private histories $(h_1^t,\ldots,h_n^t)$ for the $n$ players consistent with $s^t$. For any player $i$, define
\begin{align*}
a^* &= \arg\max_{a_i \in A_i} \mathbb{E}_{h_{-i}^t \sim \prod_{j \neq i} \mathcal{D}_j(i, s^t, \sigma_{-i})}[u_i(a_i, \sigma_{-i}(h_{-i}^t))]\\
& = \arg\max_{a_i \in A_i} \mathbb{E}_{a_{-i} \sim (\hat{\sigma}_{1 | s^t},\ldots,\hat{\sigma}_{n | s^t})}[u_i(a_i, a_{-i})]
\end{align*}
 Define $\sigma'_i$ to be the strategy that is identical to $\sigma_i$, except on history $h^t_i$, it plays $\sigma_i'(h^t_i) = a_i^*$, and then play in future periods is as if an action drawn from $\sigma_i(h^t_i)$ was played in period $t$.

Formally, for any period $\tau> t$, with realized history $h_i^\tau$, the deviation involves playing $\sigma_i(h_i^{\tau'})$, where $h_i^{\tau'}$ is the same as $h_i^\tau$ except in the component corresponding to the action played at time $t$; i.e. $a_i^{t'} \sim \sigma_i(h_t^i)$ whereas $a_i^t = a^*$ by definition.
Given the equilibrium strategies $\sigma$, we can write: %
\begin{align*}
 V_{i, \sigma}(h_i^t) =  &(1-\delta) \mathbb{E}_{a_{-i} \sim (\hat{\sigma}_{1 | s^t},\ldots,\hat{\sigma}_{n | s^t})}[u_i(\sigma_i(h_i^t), a_{-i})]\\
&+ \delta \sum_{s \in S, a_i \in A_i}V_{i,\sigma}((h_i^t,s,a_i)) P[s,a_i | h_i^{t}]
 \end{align*}
Now consider $\sigma_i'$ to be the deviation we described above. Note that by playing $\sigma_i'$ from history $h_i^t$ onwards, player $i$'s expected discounted payoff can be written as:
\begin{align*}
V_{i, (\sigma_{-i}, \sigma_i')}(h^t_i) = &(1-\delta) \mathbb{E}_{a_{-i} \sim (\hat{\sigma}_{1 | s^t},\ldots,\hat{\sigma}_{n | s^t})}[u_i(a^*, a_{-i})]\\
& + \delta \sum_{s \in S, a_i \in A_i} V_{i,\sigma}((h_i^{t},s,a_i)) P[s| h_i^{t}, a^*] P[a_i| \sigma(h_i^t)]
\end{align*}
Since $\sigma$ forms a sequential equilibrium of the repeated game, for every history $h_i^t$ we have:
$$V_{i, \sigma}(h_i^t) \geq V_{i, (\sigma_{-i}, \sigma_i')}(h_i^t)$$
Substituting the definitions of $V_{i, \sigma}(h_i^t)$ and $V_{i, (\sigma_{-i}, \sigma_i')}(h_i^t)$ into this inequality, we have:
\begin{align*}
& \left (\mathbb{E}_{a_{-i} \sim (\hat{\sigma}_{1 | s^t},\ldots,\hat{\sigma}_{n | s^t})}[u_i(a^*, a_{-i})] -  \mathbb{E}_{a_{-i} \sim (\hat{\sigma}_{1 | s^t},\ldots,\hat{\sigma}_{n | s^t})}[u_i(\sigma_i(h_i^t), a_{-i})]   \right)\\
\leq  &\frac{\delta}{1-\delta} \left( \sum_{s \in S, a_i \in A_i}V_{i,\sigma}((h_i^t,s,a_i)) P[s,a_i | h_i^{t}] -  \sum_{s \in S, a_i \in A_i} V_{i,\sigma}((h_i^t,s,a_i)) P[s| h_i^{t}, a^*] P[a_i| \sigma(h_i^t)]  \right)
\intertext{By $(\epsilon,\gamma)$-differential privacy of the private signal, and the fact that $\exp(-\epsilon) \geq 1-\epsilon$, we therefore have that this is at most}
\leq & \frac{\delta}{1-\delta} (\epsilon + \gamma).
\end{align*}
which completes the proof.
\end{proof}
\end{document}